\newcommand{\xRightarrow}[2][]{\ext@arrow 0359\Rightarrowfill@{#1}{#2}}
\newcommand\pfun{\mathrel{\ooalign{\hfil$\mapstochar\mkern5mu$\hfil\cr$\to$\cr}}}
\newcommand{\dom}{\operatorname{\mathbf{dom}}}
\newcommand{\OMIT}[1]{}
\newcounter{thmctr}
\newtheorem{theorem}[thmctr]{Theorem}
\newtheorem{lemma}[thmctr]{Lemma}
\newtheorem{corollary}[thmctr]{Corollary}
\theoremstyle{definition}
\newtheorem{definition}[thmctr]{Definition}
\newtheorem{example}[thmctr]{Example}
\begin{document}

\title{Causal Linearizability: Compositionality for
  Partially Ordered Executions}

\author{Simon Doherty\thanks{\tt s.doherty@sheffield.ac.uk}}
\author{John Derrick\thanks{\tt j.derrick@sheffield.ac.uk}}
\affil{Department of Computer Science, University of Sheffield, UK}
\author{Brijesh Dongol\thanks{\texttt{Brijesh.Dongol@brunel.ac.uk}}}
\affil{Department of Computer Science, Brunel University London, UK}

\author{Heike Wehrheim\thanks{\texttt{wehrheim@upb.de}}}
\affil{Department of Computer Science, University of Paderborn, Germany}

\newcommand{\linefill}{\cleaders\hbox{$\smash{\mkern-2mu\mathord-\mkern-2mu}$}\hfill\vphantom{\lower1pt\hbox{$\rightarrow$}}}  
\newcommand{\Linefill}{\cleaders\hbox{$\smash{\mkern-2mu\mathord=\mkern-2mu}$}\hfill\vphantom{\hbox{$\Rightarrow$}}}  
\newcommand{\transi}[2]{\mathrel{\lower1pt\hbox{$\mathrel-_{\vphantom{#2}}\mkern-8mu\stackrel{#1}{\linefill_{\vphantom{#2}}}\mkern-11mu\rightarrow_{#2}$}}}
\newcommand{\trans}[1]{\transi{#1}{{}}}
\newcommand{\transo}{\mathord{\trans{~}}}
\newcommand{\ntransi}[2]{\mathrel{\lower1pt\hbox{$\mathrel-_{\vphantom{#2}}\mkern-8mu\stackrel{#1}{\linefill_{\vphantom{#2}}}\mkern-8mu\nrightarrow_{#2}$}}}
\newcommand{\ntrans}[1]{\ntransi{#1}{{}}}
\newcommand{\eseq}[1]{\langle~\rangle}

\newcounter{sarrow}
\newcommand\strans[1]{%
  \mathrel{\raisebox{0.1em}{
    \stepcounter{sarrow}%
    \!\!
    \begin{tikzpicture}
      \node[inner sep=.5ex] (\thesarrow) {$\scriptstyle #1$};
      \path[draw,<-,decorate,line width=0.25mm,
      decoration={zigzag,amplitude=0.7pt,segment length=1.2mm,pre=lineto,pre length=4pt}] 
      (\thesarrow.south east) -- (\thesarrow.south west);
    \end{tikzpicture}%
  }}}

\newcommand\Strans[1]{%
\mathrel{\raisebox{0.1em}{
\!\!\begin{tikzpicture}
  \node[inner sep=0.6ex] (a) {$\scriptstyle #1$};
  \path[line width=0.2mm, draw,implies-,double distance between line
  centers=1.5pt,decorate, 
    decoration={zigzag,amplitude=0.7pt,segment length=1.2mm,pre=lineto,
    pre   length=4pt}] 
    (a.south east) -- (a.south west);
\end{tikzpicture}}%
}}

\newcommand{\calE}{{\cal E}}
\newcommand{\nat}{\mathbb{N}}
\newcommand{\noteq}{\neq}

\newcommand{\lt}{{\bf Less than}}

\newcommand{\ev}{\mathit{ev}}
\newcommand{\Events}{\mathit{Evt}}
\newcommand{\Inv}{\mathit{Inv}}
\newcommand{\Resp}{\mathit{Res}}
\newcommand{\his}{\mathit{his}}
\newcommand{\exec}{\mathit{exec}}
\newcommand{\complete}{\mathit{complete}}
\newcommand{\Var}{\mathit{Loc}}
\newcommand{\Val}{\mathit{Val}}
\newcommand{\CASOp}{\mathit{CAS}} 
\newcommand{\WR}{\mathsf{WR}}
\newcommand{\RA}{\mathsf{RA}}
\newcommand{\R}{\mathsf{R}}
\newcommand{\RX}{\mathsf{RX}}
\newcommand{\W}{\mathsf{W}}
\newcommand{\WX}{\mathsf{WX}}
\newcommand{\C}{\mathsf{C}}
\newcommand{\CRA}{\mathsf{CRA}}
\newcommand{\URA}{\mathsf{U}}
\newcommand{\URAT}{\mathsf{UT}}
\newcommand{\URAF}{\mathsf{UF}}

\newcommand{\HB}{{\sf hb}\xspace} 
\newcommand{\PO}{{\sf po}\xspace}
\newcommand{\MO}{{\sf mo}\xspace}
\newcommand{\SC}{{\sf sc}\xspace}
\newcommand{\RF}{{\sf rf}\xspace}
\newcommand{\SB}{{\sf sb}\xspace}

\newcommand{\ok}{{\tt ok}}
\newcommand{\retabort}{{\tt abort}}
\newcommand{\retempty}{{\tt empty}}

\newcommand{\start}[1]{\texttt{B}_{#1}}
\newcommand{\starto}[1]{(\start{#1},\ok)}
\newcommand{\starta}[1]{(\start{#1},\retabort)}

\newcommand{\commitWriter}[1]{\texttt{CW}_{#1}}
\newcommand{\commitReadOnly}[1]{\texttt{CRO}_{#1}}

\newcommand{\commit}[1]{\texttt{C}_{#1}}
\newcommand{\commito}[1]{(\commit{#1},\ok)}
\newcommand{\commita}[1]{(\commit{#1},\retabort)}

\newcommand{\readcall}[1]{\texttt{R}_{#1}}
\newcommand{\reado}[3]{(\readcall{#1}(#2), #3)}
\newcommand{\reada}[2]{(\readcall{#1}(#2),\retabort)}

\newcommand{\writecall}[1]{\texttt{W}_{#1}}
\newcommand{\writeo}[3]{(\writecall{#1}(#2,#3),\ok)}
\newcommand{\writeos}[3]{\writecall{#1}(#2,#3)}
\newcommand{\writea}[3]{(\writecall{#1}(#2,#3), \retabort)}

\newcommand{\support}{\mathit{support}}

\newcommand{\reads}{\mathit{Reads}}
\newcommand{\writes}{\mathit{Writes}}
\newcommand{\commits}{\mathit{Commits}}

\newcommand{\readas}{\mathit{AReads}}
\newcommand{\writeas}{\mathit{AWrites}}

\newcommand{\idle}{\mathit{idle}}
\newcommand{\live}{\mathit{live}}

\newcommand{\fv}{\mathit{fv}}

\newcommand{\refeq}[1]{(\ref{#1})}
\newcommand{\refalg}[1]{Algorithm~\ref{#1}}

\newcommand{\fr}{{\sf fr}}
\newcommand{\ltsb}{{\sf sb}}
\newcommand{\ltrf}{\mathord{\sf rf}}
\newcommand{\ltfr}{{\sf fr}}
\newcommand{\lthb}{{\sf hb}}
\newcommand{\ltsw}{{\sf sw}}
\newcommand{\ltmox}{{\sf mo}^x}
\newcommand{\ltmo}{{\sf mo}}
\newcommand{\PreExec}{{\it PreExec}}
\newcommand{\Approx}{{\it C11}}
\newcommand{\Seq}{{\it Seq}}

\newcommand{\True}{{\it true}}
\newcommand{\False}{{\it false}}

\newcommand{\justified}{justified\xspace}
\newcommand{\notjustified}{unjustified\xspace}
\newcommand{\Justified}{Justified\xspace}
\newcommand{\Notjustified}{Unjustified\xspace}

\newcommand{\rdval}{{\it rdval}}
\newcommand{\wrval}{{\it wrval}}
\newcommand{\loc}{{\it loc}}

\newcommand{\imp}{\Rightarrow}
\newcommand{\expr}{\mathit{Exp}}

\newcommand*{\HardArrow}{\mathbin{\tikz
    [baseline=-0.15ex,-latex] \draw[->] (0pt,0.5ex) --
    (1.1em,0.5ex);}}%

\newcommand*{\SoftArrow}[1][]{\mathbin{\tikz
    [baseline=-0.15ex,-latex,densely dashed, ->] \draw[->] (0pt,0.5ex) --
    (1.1em,0.5ex);}}%

\newcommand*{\DoubleArrow}{\mathbin{\tikz
    [baseline=-0.25ex,-latex] \draw[->>] (0pt,0.5ex) --
    (1.1em,0.5ex);}}%

\newcommand{\seqarrow}{\DoubleArrow}
\newcommand{\seqtempord}{\DoubleArrow}
\newcommand{\seqcausord}{\prec}
\newcommand{\seqrestr}{\mathop{\downharpoonright}}

\newcommand{\action}[3]{\ensuremath{
\begin{array}[t]{l@{~}l}
\multicolumn{2}{l}{#1}\\
\textsf{Pre:}&#2\\
\textsf{Eff:}&#3
\end{array}
}}

\newcommand{\beginIdx}{{\it beginIdx}}
\newcommand{\readIdx}{{\it readIdx}}
\newcommand{\validatedIdx}{{\it validatedIdx}}

\newcommand{\status}{{\it status}}
\newcommand{\wrSet}{{\it wrSet}}
\newcommand{\rdSet}{{\it rdSet}}
\newcommand{\Used}{{\it Used}}
\newcommand{\TransId}{{\it TransId}}
\newcommand{\earlier}{{\it pastWrites}}
\newcommand{\later}{{\it laterWrites}}
\newcommand{\mems}{{\it mems}}
\newcommand{\store}{{\it store}}

\newcommand{\insertAt}{{\it insertAt}}
\newcommand{\insertAfter}{{\it insertAfter}}

\newcommand{\restrict}{\downharpoonright}

\newcommand{\ES}{\mathsf{ES}}

\newcommand{\bbA}{\mathbb{A}}
\newcommand{\bbB}{\mathbb{B}}
\newcommand{\bbC}{\mathbb{C}}
\newcommand{\bbD}{\mathbb{D}}
\newcommand{\bbE}{\mathbb{E}}
\newcommand{\bbF}{\mathbb{F}}
\newcommand{\bbS}{\mathbb{S}} 
\newcommand{\bbO}{\mathbb{O}} 

\newcommand{\causalord}[1]{\prec_{#1}}
\newcommand{\causalC}{\causalord{\bbC}}
\newcommand{\causalLC}{\causalord{{\cal L}, \bbC}}

\newcommand{\po}{{\sf po}}

\newcommand{\Comm}{Comm}
\newcommand{\PComm}{Comm_{\parallel}}
\newcommand{\whilestep}[1]{\stackrel{#1}{\Longrightarrow}}
\newcommand{\parastep}[1]{\stackrel{#1}{\rightarrow_{\parallel}}}
\newcommand{\whileEnd}{\mathbf{end}}

\newcommand{\traceArrow}[1]{\stackrel{#1}{\longrightarrow}}
\newcommand{\ltsArrow}[1]{\stackrel{#1}{\Longrightarrow}}

\newcommand{\lfState}{{\it State}}
\newcommand{\lfStart}{{\it Init}}
\newcommand{\lfP}{{\it P}}
\newcommand{\lfStep}{{\it step}}
\newcommand{\lfM}{{\it M}}
\newcommand{\lfBody}{{\it body}}
\newcommand{\lfDep}{{\it dep}}
\newcommand{\lfExt}{{\it X}}
\newcommand{\lfEnabled}{{\it Enabled}}

\newcommand{\libf}{\mathcal}
\newcommand{\libField}[2]{{#1}_{\libf{\!#2}}\xspace}
\newcommand{\libState}{\libField{\lfState}}
\newcommand{\libStart}{\libField{\lfStart}}
\newcommand{\libP}{\libField{\lfP}}
\newcommand{\libStep}{\libField{\lfStep}}
\newcommand{\libM}{\libField{\lfM}}
\newcommand{\libBody}{\libField{\lfBody}}
\newcommand{\libDep}{\libField{\lfDep}}
\newcommand{\libExt}{\libField{\lfExt}}
\newcommand{\libEnabled}{\libField{\lfEnabled}}
\newcommand{\libT}[2]{\stackrel{#1}{\libStep{#2}}}

\newcommand{\kwtag}{{\it tag}}
\newcommand{\tid}{{\it tid}}
\newcommand{\act}{{\it act}}
\newcommand{\Op}{A}

\newcommand{\emptymap}{emptymap}
\newcommand{\emptyord}{empord}

\newcommand{\Empty}{\retempty}

\maketitle



\begin{abstract}
In the interleaving model of concurrency, where events are totally ordered,
linearizability is {\em compositional}: the composition of two linearizable
objects is guaranteed to be linearizable. However, linearizability is not
compositional when events are only partially ordered, as in many weak-memory
models that describe multicore memory systems. In this paper,
we present  \emph{causal linearizability}, a correctness condition
for concurrent objects implemented in weak-memory models.
We abstract from the details of specific
memory models by defining our condition using Lamport's
execution structures. We apply our condition to the C11 memory model,
providing a correctness condition for C11 objects. We develop
a proof method for verifying objects implemented in C11 and related models.
Our method is an adaptation of simulation-based methods, but in contrast to
other such methods, it does not require that the implementation totally order
its events. We also show that causal linearizability reduces to linearizability
in the totally ordered case.



\end{abstract}

\newcommand{\eqrng}[2]{(\ref{#1}-\ref{#2})}
\newcommand{\refprop}[1]{Proposition~\ref{#1}}
\newcommand{\reffig}[1]{Fig.~\ref{#1}}
\newcommand{\refthm}[1]{Theorem~\ref{#1}}
\newcommand{\reflem}[1]{Lem\-ma~\ref{#1}}
\newcommand{\refcor}[1]{Corollary~\ref{#1}}
\newcommand{\refsec}[1]{Section~\ref{#1}}
\newcommand{\refex}[1]{Example~\ref{#1}}
\newcommand{\refdef}[1]{Definition~\ref{#1}}
\newcommand{\reflst}[1]{Listing~\ref{#1}}
\newcommand{\refchap}[1]{Chapter~\ref{#1}}
\newcommand{\reftab}[1]{Table~\ref{#1}}

\tikzset{
    mo/.style={dotted,->,>=stealth},
    hb/.style={solid,->,>=stealth},
    rf/.style={dashed,->,>=stealth}
 }

\section{Introduction}

Linearizability \cite{HeWi90,DBLP:books/daglib/0020056} is a
well-studied~\cite{DongolD15} condition that defines correctness of a
concurrent object in terms of a sequential specification. It ensures
that for each history of an implementation, there is a history of the
specification such that (1) each thread makes the same method
invocations in the same order, and (2) the order of non-overlapping
operation calls is preserved. The condition however, critically
depends on the existence of a total order of memory events (e.g., as
guaranteed by sequential consistency (SC)
\cite{DBLP:journals/tc/Lamport79}) to guarantee contextual refinement
\cite{DBLP:journals/tcs/FilipovicORY10} and compositionality
\cite{HeWi90}. Unfortunately most modern systems can only guarantee a
partial order of memory events, e.g., due to the effects of relaxed
memory
\cite{DBLP:journals/computer/AdveG96,AlglaveMT14,DBLP:conf/popl/BattyOSSW11,DBLP:conf/popl/MansonPA05}. It
is known that a naive adaptation of linearizability to the partially
ordered setting of weak memory is problematic from the perspective of
contextual refinement~\cite{DongolJRA18}. In this paper, we propose
a compositional generalisation of linearizability for partially ordered
executions.

\begin{figure}[t] 
\begin{minipage}{0.42\columnwidth}
\centering
$ x= 0, y = 0 \quad $

$\begin{array}{l|l}
  \hfill \textrm{Process 1} \hfill & \hfill \textrm{Process 2} \hfill \\
\hline 
  1: x := 1; \ \ & 1: \mathtt{if} (y = 1)  \quad \\
  2: y := 1; & 2:    \ \ \mathtt{assert} (x = 1); \\
\end{array}$
\caption{Writing to shared variables}
\label{fig:litmus}
\end{minipage}
\ \  
\begin{minipage}{0.5\columnwidth}
\centering
$ \textsc{S} = \langle~\rangle , \textsc{S}' = \langle~\rangle\quad $

$\begin{array}{l|l}
  \hfill \textrm{Process 1} \hfill & \hfill \textrm{Process 2} \hfill \\
\hline 
  1: \textsc{S.Push}(1); \ \ & 1: \mathtt{if } (\textsc{S'.Pop} = 1
                               ) \\
  2: \textsc{S'.Push}(1); & 2:    \ \ \mathtt{assert} (\textsc{S.Pop}
                            = 1 
                            ); \\
\end{array}$
\caption{
  Writing to shared stacks }
\label{fig:litmus-b}
\end{minipage}
\end{figure}

Figs.\ \ref{fig:litmus} and \ref{fig:litmus-b} show two
examples\footnote{Example inspired by H.-J.Boehm talk
  at Dagstuhl, Nov.~2017} of multi-threaded programs on which weak
memory model effects can be observed. \reffig{fig:litmus} shows two
threads writing to and reading from two shared variables $x$ and
$y$. Under SC, the \texttt{assert} in process 2 never fails: if $y$
equals 1, $x$ must also equal 1. However, in weak memory models like
the C11
model~\cite{DBLP:conf/popl/BattyOSSW11,DBLP:conf/popl/LahavGV16}, this
is not true: if the writes to $x$ and $y$ are \emph{relaxed}, process
$2$ may observe the write to $y$, yet also observe the initial
value~$x$ (missing the write to $x$ by process $1$).

Such effects are not surprising to programmers familiar with memory
models~\cite{DBLP:conf/popl/BattyOSSW11,DBLP:conf/popl/LahavGV16}.
However, programmer expectations for linearizable objects, even in a
weak memory model like C11, are different
: if the two stacks $\textsc{S}$ and
$\textsc{S}'$ in \reffig{fig:litmus-b} are linearizable, the
expectation is that the \texttt{assert} will never fail since
linearizable objects are expected to be {\em compositional}
\cite{HeWi90,DBLP:books/daglib/0020056}, i.e., any combination of
linearizable objects must itself be linearizable. However, it is
indeed possible for the two stacks to be linearizable (using the
classical definition), yet for the program to generate an execution in
which the \texttt{assert} fails. The issue here is that
linearizability, when naively applied to a weak memory setting, allows
too many operations to be considered ``overlapping''.

Our key contribution in this paper is the development of a new
compositional notion of correctness, called {\em causal
  linearizability}, which is defined in terms of an {\em execution
  structure}~\cite{DBLP:journals/dc/Lamport86}, taking two different
relations over operations into account: a ``precedence order''
(describing operations that are ordered in time) and a ``communication
relation''. Applied to \reffig{fig:litmus-b}, for a weak memory
execution in which the \texttt{assert} fails, the execution restricted
to stack \textsc{S} would not be causally linearizable in the first
place. Namely, causal linearizability ensures enough \emph{precedence
  order} in an execution to ensure that the method call
$\textsc{S.Push}(1)$ occurs before $\textsc{S.Pop}$, meaning
$\textsc{S.Pop}$ is forced to return $1$.

Execution structures are generic, and can be constructed for any weak
memory execution that includes method invocation/response events. Our
second contribution is one such scheme for mapping executions to
execution structures based on the happens-before relation of the
C11 memory model. Given method calls $m1$ and $m2$, we say $m1$
precedes $m2$ if the response of $m1$ happens before the invocation
$m2$; we say $m1$ communicates with $m2$ if the invocation of $m1$
happens before the response of $m2$.

Our third contribution is a new inductive simulation-style proof
technique for verifying causal linearizability of weak memory
implementations of concurrent objects, where the induction is over
linear extensions of the happens-before relation. This is the first
such proof method for weak memory, and one of the first that enables
full verification, building on existing techniques for linearizability
in SC
\cite{DBLP:journals/tocl/SchellhornDW14,DongolD15,DohertyGLM04}. Our
fourth contribution is the application of this proof technique to
causal linearizability of the Treiber Stack in the C11 memory model.

We present our motivating example, the Treiber Stack in C11 in
\refsec{sec:treiber-stack-c11}; describe the problem of
compositionality and motivate our execution-structure based solution
in \refsec{sec:executionstructures}; and formalise causal
linearizability and prove compositionality in
\refsec{sec:causal-atomicity}. Causal linearizability for C11 is
presented in \refsec{sec:caus-line-c11}, and verification of the stack
described in \refsec{sect:verification}.

\section{Treiber Stack in C11}
\label{sec:treiber-stack-c11}

\begin{algorithm}[t] 
  \caption{Release-Acquire Treiber Stack}
  \label{alg:treiber}
  \begin{algorithmic}[1]
    \Procedure{\textsc{Init}}{}
    \State Top := null; 
    \EndProcedure
    \algstore{tml-ctr}
  \end{algorithmic}
  \smallskip 
  \begin{varwidth}[t]{0.45\columnwidth}
    \begin{algorithmic}[1]
      \algrestore{tml-ctr}
      \Procedure{\textsc{Push}}{v}
      \State n := new(node) ; 
      \State n.val := v ; \label{write-val}
      \Repeat 
      \State top :=$^{\sf A}$ Top ;  \label{push-acquire}
      \State n.nxt := top  ; \label{write-nxt}
      \Until {CAS$^{\sf R}$(\&Top, top, n) } \label{push-release}
      \EndProcedure
      \algstore{tml-ctr}
    \end{algorithmic}

  \end{varwidth}
  \hfill
  \begin{varwidth}[t]{0.55\columnwidth}
    \begin{algorithmic}[1]
      \algrestore{tml-ctr}
      \Procedure{\textsc{Pop}}{} 
      \Repeat 
      \Repeat \label{spin1}
        \State top :=$^{\sf A}$ Top ; \label{pop-acquire} \label{spin2}
      \Until {top $\neq$ null} ; \label{spin3}
      \State ntop := top.nxt ;
      \Until {CAS$^{\sf R}$(\&Top, top, ntop)} \label{pop-release}
      \State \Return top.val ;
      \EndProcedure 
    \end{algorithmic}
  \end{varwidth}
\end{algorithm}

The example we consider (see \refalg{alg:treiber}) is the well-studied
Treiber Stack \cite{Tre86}, executing in a recent version of the C11
\cite{DBLP:conf/pldi/LahavVKHD17} memory model.  In C11, commands may
be annotated, e.g., {\sf R} (for release) and {\sf A} (for acquire),
which introduces extra synchronisation, i.e., additional order over
memory
events~\cite{DBLP:conf/popl/BattyOSSW11,DBLP:conf/popl/LahavGV16}. We
assume racy read and write accesses that are not part of an annotated
command are \emph{unordered} or \emph{relaxed}, i.e., we do not
consider the effects of non-atomic
operations~\cite{DBLP:conf/popl/BattyOSSW11}. Full details of the C11
memory model are deferred until \refsec{sec:caus-line-c11}.

Due to weak memory effects, the events under consideration, including
method invocation and response events are partially ordered. As we
show in \refsec{sec:executionstructures}, it turns out that one cannot
simply reapply the standard notion of linearizability in this weaker
setting; compositionality demands that we use modified form: causal
linearizability that additionally requires ``communication'' across
conflicting operations.

In \refalg{alg:treiber}, all accesses to the shared variable Top are
via an annotated command. Thus, any read of Top (lines
\ref{push-acquire}, \ref{pop-acquire}) reading from a write to Top
(lines~\mbox{\ref{push-release}, \ref{pop-release}}) induces
\emph{happens-before order} from the write to the read. This order, it
turns out, is enough to guarantee invariants that are in turn strong
enough to guarantee\footnote{Note that a successful CAS operation
  comprises both a read and a write access to Top, but we only require
  release synchronisation here. The corresponding acquire
  synchronisation is provided via the earlier read in the same
  operation. This synchronisation is propagated to the CAS by
  \emph{sequenced-before} (aka program order), which, in C11, is
  included in happens-before (see \refsec{sect:verification} for
  details).} causal linearizability of the Stack (see
\refsec{sect:verification}).

Note that we modify the Treiber Stack so that the \textsc{Pop}
operation blocks by spinning instead of returning empty. This is for
good reason - it turns out that the standard Treiber Stack (with a
non-blocking \textsc{Pop} operation) is \emph{not} naturally
compositional if the only available synchronisation is via
release-acquire atomics (see \refsec{sec:synchr-pitf}). 

\usetikzlibrary{positioning}

\section{Compositionality and execution structures}
\label{sec:executionstructures}

This section describes the problems with compositionality for
linearizability of concurrent objects under weak execution
environments (e.g., relaxed memory) and motivates a generic solution
using \emph{execution structures} 
\cite{DBLP:journals/dc/Lamport86}.

\medskip\noindent{\bf Notation.}  First we give some basic
notation. Given a set $X$ and a relation $r \subseteq X \times X$, we
say $r$ is a \emph{partial order} iff it is reflexive, antisymmetric
and transitive, and a \emph{strict order}, iff it is irreflexive,
antisymmetric and transitive. The support of $r$ is denoted
$\support(r) = dom(r) \cup ran(r)$. A partial or strict order $r$ is a
\emph{total order} iff either $(a, b) \in r$ or $(b, a) \in r$ for all
$a, b \in \support(r)$. We typically use notation such as $<$, $\le$,
$\prec$, $\HardArrow$ to denote orders, and write, for example,
$a < b$ instead of $(a, b) \in \mathop<$.  \qed\medskip

The \emph{operations} of an object are defined by a set of {\em
  labels}, $\Sigma$. For concurrent data structures,
$\Sigma = \Inv \times \Resp$, where $\Inv$ and $\Resp$ are sets of
invocations and responses (including their input and return values),
respectively. For example, for a stack $\textsc{S}$ of naturals, the
{\em invocations} are given by
$\{\textsc{Push}(n) \mid n \in \nat\} \cup \{\textsc{Pop}\}$, and the
{\em responses} by $\nat \cup \{\bot, {\tt empty}\}$, and
\begin{align*}
  \Sigma_\textsc{S} = {} & \{(\textsc{Push}(n),\bot), (\textsc{Pop}, n) \mid
                           n\in \mathbb{N}\} \cup
                           \{(\textsc{Pop},\retempty)\}
\end{align*}
The standard notion of linearizability is defined for a concurrent
history, which is a sequence (or total order) of {\em invocation} and
{\em response} events of operations. 

Since operations are concurrent, an invocation of an operation may not
be directly followed by its matching response, and hence, a history
induces a partial order on operations. For linearizability, we focus
on the {\em real-time} partial order (denoted $\HardArrow$), where,
for operations $o$ and $o'$, we say $o \HardArrow o'$ in a history iff
the response of operation $o$ \emph{happens before} the invocation of
operation $o'$ in the history. A concurrent implementation of an
object is linearizable if the real-time partial order ($\HardArrow$)
for \emph{any} history of the object can be extended to a total order
that is \emph{legal} for the object's specification~\cite{HeWi90}. It
turns out that linearizability in this setting is
\emph{compositional}~\cite{HeWi90,DBLP:books/daglib/0020056}: any
history of a family of linearizable objects is itself guaranteed to be
linearizable.

Unfortunately, histories in modern executions contexts (e.g., due to
relaxed memory or distributed computation) are only partially ordered
since processes do not share a single global view of time. It might
seem that this is unproblematic for linearizability and that the
standard definition can be straightforwardly applied to this weaker
setting. However, it turns out that a naive application fails to
satisfy {\em compositionality}.  To see this, consider the following
example.

\begin{example}
  \label{ex:stack1}
  Consider a history $h$, partially ordered by a \emph{happens-before}
  relation, for two stacks {\sc S} and {\sc S'} that are both
  initially empty (denoted by $\bot$). Suppose that in $h$, the
  response of {\sc S'.Push} happens before the invocation of {\sc
    S.Pop}, and the response of {\sc S.Push} happens before the
  invocation of {\sc S.Pop}. History $h$ induces a partial order over
  these operations as shown below:
  \begin{center}
    \begin{tikzpicture}
        \node [draw,scale=0.8] (A) {$(\textsc{S'.Pop}, 11)$};
        \node [draw, right=of A,scale=0.8] (B) {$(\textsc{S.Push}(42), \bot)$};
        \node [draw, below=0.7cm of A,scale=0.8] (C)
        {$(\textsc{S.Pop}, 42)$};
        \node [draw, right=of C,scale=0.8] (D) {$(\textsc{S'.Push}(11), \bot)$};
        \draw [thick,->] (A) -- (B) ;
        \draw [thick,->] (C) -- (D) ;       
      \end{tikzpicture}
  \end{center}
  If we restrict the execution above to {\sc S} only, we can obtain a
  legal stack behaviour by linearizing $(\textsc{S.Push}(42), \bot)$
  before $(\textsc{S.Pop}, 42)$ without contradicting the real-time
  partial order $\HardArrow$ in the diagram above. Similarly, the
  execution when restricted to $\textsc{S}'$ is linearizable. However,
  the full execution is not linearizable: ordering both pushes before
  both pops contradicts the induced real-time partial order
  ($\HardArrow$ above). \qed
\end{example}

A key contribution of this paper is the development of a correctness
condition, \emph{causal linearizability}, that recovers
compositionality of concurrent objects with partially ordered
histories. Our definition is based on two main insights.

The first insight is that one must augment the real-time partial order
with additional information about the underlying concurrent
execution. In particular, one must introduce information about the
\emph{communication} when linearizing \emph{conflicting}
operations. Two operations conflict if they do not commute according
to the sequential specification, e.g., for a stack data structure,
$\textsc{Push}$ and $\textsc{Pop}$ are conflicting.  Causal linearizability
states that for any conflicting operations, say $o$ and $o'$, that are
linearized in a particular order, say $o\, o'$, there must exist some
communication from $o$ to $o'$. We represent communication by a
relation $\SoftArrow$.

\newpage
\begin{example}
  \label{ex:comm-edges}
  Consider the partial order in \refex{ex:stack1}. For both stacks
  $\textsc{S}$ and $\textsc{S}'$, the {\sc Push} must be linearized
  before the {\sc Pop}, and hence, we must additionally have
  communication edges as follows: \medskip

  \hfill \begin{tikzpicture}
       \node [draw,scale=0.8] (A) {$(\textsc{S'.Pop}, 11)$}; 

       \node [draw, right=of A,scale=0.8] (B) {$(\textsc{S.Push}(42), \bot)$};

       \node [draw, below=0.7cm of A,scale=0.8] (C)
       {$(\textsc{S.Pop}, 42)$};

       \node [draw, right=of C,scale=0.8] (D)
       {$(\textsc{S'.Push}(11), \bot)$};
           
       \draw [thick,->] (A) -- (B) ; 
       \draw [thick,->] (C) -- (D) ;
       \draw [thick,dashed,->] (B) -- (C) ; 
       \draw [thick,dashed,->] (D) -- (A) ;
     \end{tikzpicture} \hfill {} \qed
\end{example}

The second insight is that the operations and the induced real-time
partial order, $\HardArrow$, extended with a communication relation,
$\SoftArrow$, must form an \emph{execution structure}
\cite{DBLP:journals/dc/Lamport86}, defined below.
\begin{definition}[Execution structure]
  Given that $E$ is a finite\footnote{The original presentation allows
    for infinite execution structures, placing a well-foundedness
    condition on $\HardArrow$.} set of events, and
  $\mathord{\HardArrow}, \mathord{\SoftArrow} \subseteq E \times E$
  are relations over $E$, an {\em execution structure} is a tuple
  $(E,\HardArrow,\SoftArrow)$ satisfying the following axioms for
  $e_1, e_2, e_3 \in E$.
  \begin{description}
  \item[A1] The relation $\HardArrow$ is a strict order.
  \item[A2] Whenever $e_1 \HardArrow e_2$, then
    $e_1 \SoftArrow e_2$ and $\neg (e_2 \SoftArrow e_1)$.
  \item[A3]  If $e_1 \HardArrow e_2 \SoftArrow e_3$ or $e_1
    \SoftArrow e_2 \HardArrow e_3$, then $e_1 \SoftArrow
    e_3$.  
  \item[A4] If $e_1 \HardArrow e_2 \SoftArrow e_3 \HardArrow e_4$, then $e_1 \HardArrow e_4$. \qed
  \end{description}
\end{definition}

\begin{example} Consider the execution depicted in
  \refex{ex:comm-edges}.  The requirements of an execution structure,
  in particular axiom {\bf A4} necessitate that we introduce
  additional real-time partial order edges $\mathord{\HardArrow}$ as
  follows.
  \begin{center}
    \begin{tikzpicture}
       \node [draw,scale=0.8] (A) {$(\textsc{S'.Pop}, 11)$}; 

       \node [draw, right=of A,scale=0.8] (B) {$(\textsc{S.Push}(42), \bot)$};

       \node [draw, below=0.7cm of A,scale=0.8] (C)
       {$(\textsc{S.Pop}, 42)$};

       \node [draw, right=of C,scale=0.8] (D)
       {$(\textsc{S'.Push}(11), \bot)$};

       
       \draw [thick,->] (A) -- (B) ;
       \draw [thick,->] (C) -- (D) ;       
       \draw [thick,dashed,<-] (C.north) -- (B) ;
       \draw [thick,dashed,->] (D) -- (A.south) ;       
       \draw [thick,<-] (B.south) -- (C.north east) ;
       \draw [thick,->] (A.south east)  -- (D.north) ;       
     \end{tikzpicture}
  \end{center}
  For example, the edge
  $(\textsc{S.Pop}, 42) \HardArrow (\textsc{S.Push}(42), \bot)$ is
  induced by the combination of edges
  $(\textsc{S.Pop}, 42) \HardArrow (\textsc{S'.Push}(11), \bot)
  \SoftArrow (\textsc{S'.Pop}, 11) \HardArrow (\textsc{S.Push}(42),\bot) $ together with axiom {\bf A4}. \qed
\end{example}

A consequence of these additional real-time partial order edges is
that {\sc S} (and symmetrically {\sc S'}) is not linearizable since
the edge $(\textsc{S.Pop}, 42) \HardArrow (\textsc{S.Push}(42), \bot)$
must be present even when restricting the structure to {\sc S}
only. Hence compositionality no longer fails.

\newcommand{\legal}{{\it legal}}
\newcommand{\Ws}{{\it Ws}}
\newcommand{\Rs}{{\it Rs}}
\newcommand{\calR}{{\cal R}}

\newcommand{\bbT}{\mathbb{T}}

\newcommand{\History}{\mathit{Hist}} \newcommand{\hs}{\mathit{hs}}

\section{Causal linearizability}
\label{sec:causal-atomicity}
This section provides a formal definition of causal linearizability,
and the compositionality theorem. We define sequential objects in
\refsec{sec:sequ-spec}, then 
define causal linearizability in
\refsec{sec:caus-atom-compl}. 

\subsection{Sequential specifications}
\label{sec:sequ-spec}

Causal linearizability defines correctness of a concurrent object with
respect to a \emph{sequential object} specification.
\begin{definition}[Sequential object]
  \label{def:sequ-object}
  A \emph{sequential object} is a pair $(\Sigma, \legal)$, where
  $\legal \subseteq \Sigma^*$ is a prefix-closed sequence of labels. \qed
\end{definition}
For example, in each legal sequence of a stack, each pop operation
returns the value from the latest push operation that has not yet been
popped, or $\Empty$ if no such operation exists.

For each sequential object, we define a \emph{conflict relation},
$\# \subseteq \Sigma \times \Sigma$, based on the legal behaviours of
the object. Two operations conflict if they do not commute in some
legal history:
\begin{align*}
  o \# o' & = \neg (\forall k_1, k_2 \in \Sigma^*.\  k_1 o\, o' k_2 \in \legal
    \Leftrightarrow k_1 o' o k_2 \in \legal  )
\end{align*}
For a stack, we have, for instance,
$(\textsc{Push}(n), \bot) \# (\textsc{Pop}, n')$ for any $n, n'$, and
for $n \neq n'$,
$(\textsc{Push}(n), \bot) \# (\textsc{Push}(n'), \bot)$.

We now show (in \reflem{lemma:extensions} below) that the order of
conflicting actions in a sequential history captures \emph{all the
  orders in that history that matter}. This is formalized and proved
using order relations derived from a legal sequence. However, since
the same action can occur more than once in a legal sequence, we lift
actions to {\em events} by enhancing each action with a unique tag and
process identifier\footnote{Strictly speaking, the process identifier
  is unimportant for \reflem{lemma:extensions}, but we introduce it
  here to simplify compatibility with the rest of this paper.}. Thus,
given a sequential object $\bbS = (\Sigma, \legal)$ an {\em
  $\bbS$-event} is a triple $(g, p, a)$ where $g$ is an event tag
(taken from a set of tags $G$), $p$ is a process (taken from a set of
processes $P$) and $a$ is a {\em label} in $\Sigma$.  We let $\Events$
be the set of all $\bbS$-events, and for a sequence $k\in \Events^*$,
$\ev(k)$ be the set of events in $k$.

The definitions of legality and conflict as well as sequential
specifications can naturally be lifted to the level of events by
virtue of the action labels. That is, a sequence of events is legal if
the sequence of actions it induces is legal.  In the following, we
therefore use $\legal$ to refer to sequences of actions and sequences
of events interchangeably.  From a sequence $k \in \Events^*$ we
derive two relations on events, a \emph{temporal ordering}
($\mathord{\seqtempord_k}$) and a {\em causal ordering}
($\mathord{\seqcausord_k}$), where:
\begin{align*}
e \seqtempord_k e'& = \exists k_1, k_2, k_3 \in \Events^*.\ k = k_1 e
                    k_2 e' k_3 & \qquad 
e \seqcausord_k e'& = e \seqtempord_k e' \wedge e \# e'
\end{align*}  
{\em Any} sequential history that extends the causal order of a
legal history is itself legal. We formalize this in the following
lemma.

\begin{lemma}[Legal linear extensions]
\label{lemma:extensions}
For a sequential object $(\Sigma, \legal)$, if $k \in \legal$ and
$k'\in\Events^*$, such that $\ev(k) = \ev(k')$ and
$\mathord{\seqcausord_k} \subseteq \mathord{\seqtempord_{k'}}$, then
$k' \in \legal$. 
\end{lemma}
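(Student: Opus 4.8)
The plan is to prove the lemma by an \emph{adjacent-transposition} (``bubble-sort'') argument: I transform $k$ into $k'$ one swap at a time, where each swap exchanges two \emph{adjacent, non-conflicting} events and therefore preserves legality. Since the tags make every event distinct, $\ev(k) = \ev(k')$ tells us $k'$ is a permutation of $k$, and each of $\seqtempord_k, \seqtempord_{k'}$ is a strict total order on the common event set $\ev(k)$. I define the \emph{inversion set} $I(k,k') = \{(e,e') : e \seqtempord_k e' \text{ and } e' \seqtempord_{k'} e\}$, the pairs ordered oppositely by the two sequences, and induct on $|I(k,k')|$.

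For the base case, if $|I(k,k')| = 0$ then $\seqtempord_k$ and $\seqtempord_{k'}$ agree on every pair of the common (distinct) event set, so $k = k'$ and $k' \in \legal$ follows from the assumption $k \in \legal$.

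For the inductive step, I first observe that \emph{no inverted pair can conflict}: if $(e,e') \in I(k,k')$ and $e \# e'$, then $e \seqcausord_k e'$, so the hypothesis $\mathord{\seqcausord_k} \subseteq \mathord{\seqtempord_{k'}}$ gives $e \seqtempord_{k'} e'$, contradicting $e' \seqtempord_{k'} e$. Next---and this is the combinatorial crux---I would choose an inverted pair $(e,e')$ \emph{minimising the number of events strictly between $e$ and $e'$ in $k$}, and show it must be \emph{adjacent} in $k$. If some $f$ lay strictly between them in $k$, a short case analysis on the position of $f$ relative to $e'$ and $e$ in $k'$ produces, in every case, a strictly closer inverted pair (either $(e,f)$ or $(f,e')$), contradicting minimality. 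Hence $k = k_1\, e\, e'\, k_3$ with $e,e'$ adjacent and, by the observation above, non-conflicting.

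Finally I would perform the swap. Writing $k'' = k_1\, e'\, e\, k_3$, the fact that $e,e'$ do not conflict means, unfolding the definition of $\#$ at the level of their action labels, that $k_1\, e\, e'\, k_3 \in \legal \Leftrightarrow k_1\, e'\, e\, k_3 \in \legal$; since $k \in \legal$ this gives $k'' \in \legal$. Because $e,e'$ are adjacent, any third event is either before both or after both, so the swap changes the relative order of no pair other than $(e,e')$ itself; in particular $\mathord{\seqcausord_{k''}} = \mathord{\seqcausord_k} \subseteq \mathord{\seqtempord_{k'}}$ and $\ev(k'') = \ev(k')$, while $|I(k'',k')| = |I(k,k')| - 1$ since the one flipped pair was precisely an inversion that is now resolved. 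The induction hypothesis applied to $k''$ and $k'$ then yields $k' \in \legal$. The main obstacle is the ``closest inverted pair is adjacent'' step; everything else is bookkeeping relying on the symmetry of $\#$ and the distinctness of events.
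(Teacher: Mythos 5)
Your proof is correct, and it shares the paper's overall strategy---transform $k$ into $k'$ by legality-preserving exchanges of non-conflicting events, guided by a mis-ordered (inverted) pair of minimal gap in $k$---but it diverges at the key combinatorial step, and the divergence is a genuine improvement. The paper splits into two cases: when the minimal mis-ordered pair is adjacent in $k$ it swaps the two events, and otherwise it moves an intermediate event $e_1'$ forward in a single block past $e_2$, arguing that no event in between conflicts with both endpoints and that no new mis-ordered pairs arise. You instead prove that the minimal-gap inverted pair is \emph{necessarily adjacent}: if $f$ lies strictly between $e$ and $e'$ in $k$, then either $e' \seqtempord_{k'} f$, making $(f,e')$ a strictly closer inversion, or $f \seqtempord_{k'} e'$, whence $f \seqtempord_{k'} e$ by transitivity and $(e,f)$ is a strictly closer inversion; either way minimality is contradicted, so the paper's non-adjacent case never arises. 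This buys you two things. First, a cleaner induction: every step is a single adjacent transposition of a non-conflicting pair, where legality is preserved by unfolding the definition of $\#$, the hypothesis $\mathord{\seqcausord_{k''}} = \mathord{\seqcausord_k} \subseteq \mathord{\seqtempord_{k'}}$ is maintained using the symmetry of $\#$, and the measure $|I(k,k')|$ drops by exactly one. Second, you sidestep the murkiest part of the paper's argument: its non-adjacent case introduces $e_1'$ without a clear definition and infers $e_1 \seqcausord_k e_2$ from $e_1 \seqcausord_k e_1' \seqcausord_k e_2$, a transitivity-style step that $\seqcausord_k$ does not support in general (the conflict relation $\#$ need not be transitive), and the block move then requires the extra argument that it creates no new mis-ordered pairs. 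Your adjacency observation shows that all of this machinery is unnecessary, effectively repairing as well as simplifying the paper's proof.
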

\begin{proof}
  We transform $k$ into $k'$ by reordering events in $k$ to match the
  order $\seqtempord_{k'}$.  We only reorder events that are not
  conflicting and thus, each step of the transformation preserves
  legality.  This is sufficent to prove that $k' \in \legal$.  Let a
  {\em mis-ordered pair} be any pair of events $e, e'$ such that
  $e \seqtempord_{k} e'$ but $e' \seqtempord_{k'} e$. Note that in
  this case, we have $e\not\seqcausord_k e'$, because
  $\mathord{\seqcausord_k} \subseteq \mathord{\seqtempord_{k'}}$.  Let
  $e_1, e_2$ be a mis-ordered pair with minimal distance in $k$
  between the two elements (i.e., so that the number of events in $k$
  between $e_1$ and $e_2$ is minimal).  We will reorder
  non-conflicting events in $k$ so as to eliminate this mis-ordered
  pair, or reduce its size without creating a new mis-ordered
  pair. Once all mis-ordered pairs have been eliminated, we will have
  transformed $k$ into $k'$, while preserving the legality of $k$.

  If $e_1$ and $e_2$ are adjacent in $k$, then because
  $e_1\not\seqcausord_k e_2$, we have $\neg (e_1 \# e_2)$, and thus we
  can reorder them to form a new sequence $\legal$ with fewer
  mis-ordered pairs.

  If $e_1' \# e_2$ then we would have
  $e_1 \seqcausord_k e_1' \seqcausord_k e_2$ and so
  $e_1 \seqcausord_k e_2$, which is a contradiction. The same argument
  shows that there is {\em no} event between $e_1'$ and $e_2$ that
  conflicts with both. So let $k''$ be the sequence derived from $k$
  by reordering $e_1'$ forward just past $e_2$.  Note that because
  there were no conflicts, $k'' \in legal$.  It remains to show that
  $k''$ has no mis-ordered pairs that were not already present in $k$.
  This could only happen if there was some $e_2'$ such that
  $e_1' \seqtempord_S e_2' \seqtempord_S e_2$ and
  $e_1' \seqtempord_{k'} e_2'$.  Because $e_1, e_2$ is the mis-ordered
  pair with {\em minimal} gap in $k$, it must be that
  $e_2' \seqtempord_{k'} e_2$, but then $e_2' \seqtempord_{k'} e_1$
  while $e_1 \seqtempord_{k} e_2'$. Thus, in this case, $e_1, e_2'$
  forms a smaller mis-ordered pair, contrary to
  hypothesis. 
\end{proof}

As we shall see, this lemma is critical in the proof of our
compositionality result, \refthm{theorem:compositionality}.

\subsection{Concurrent executions and causal linearizability}
\label{sec:caus-atom-compl}

We now define causal linearizability. For simplicity, we assume
\emph{complete} concurrent executions, i.e., executions in which every
invoked operation has returned. It is straightforward to extend these
notions to cope with incomplete executions. 

In general, executions of concurrent processes might invoke operations
on several concurrent objects.  To capture this, we define a notion of
{\em object family}, which represents a composition of sequential
objects, indexed by some set $X$.
\begin{definition}[Object family]
  Suppose $X$ is an index set. For each $x \in X$, assume a sequential
  object $\bbS_x = (\Sigma_x, \legal_x)$ such that $\Sigma_x$ is
  disjoint from $\Sigma_y$ for all $y \in X \backslash\{x\}$. We
  define the {\em object family over $X$},
  $\bbS_X = (\Sigma_X, \legal_X)$ by:
  \begin{itemize}
  \item $\Sigma_X = \bigcup_x \Sigma_x$
  \item
    $\legal_X = \{k \in \Sigma_X^* \mid \forall x.~k\seqrestr x \in
    \legal_x\}$, where $k\seqrestr x$ is the sequence $k$ restricted
    to actions of object $x$. Thus the set $\legal_X$ contains exactly
    the interleavings of elements of each of the $\legal_x$. \qed
  \end{itemize}
\end{definition}
\noindent
N.B., the pairwise disjointness requirement on $\Sigma_x$ can be
readily achieved by attaching the object identifier $x$ to each
operation in $\Sigma_x$.

An execution structure $(E, \HardArrow, \SoftArrow)$ is a
\emph{complete $\bbS$-execution structure} iff all events in $E$
are $\bbS$-events.
\begin{definition}[Causal linearizability]
\label{def:caus-atom-complete}
Let $\bbS_X = (\Sigma_X, \legal_X)$ be an object family. A complete
$\bbS_X$-execution structure $(E, \HardArrow, \SoftArrow)$ is {\em
  causally linearizable} if there exists a $k \in \legal_X$ with
$\ev(k) = E$ such that
$\mathord{\HardArrow} \subseteq \mathord{\seqarrow_k}$, and
$\mathord{\SoftArrow}\supseteq \mathord{\seqcausord_{k}}$.
\qed
\end{definition}
Condition $\mathord{\HardArrow} \subseteq \mathord{\seqarrow_k}$
ensures that the real-time (partial) order of operations is consistent
with the chosen $k$, while condition
$\mathord{\SoftArrow}\supseteq \mathord{\seqcausord_{k}}$ captures the
idea that the causal ordering in $k$ (i.e., the ordering between
conflicting actions) requires a communication in the concurrent
execution. Causal linearizability for single objects is a special case of
\refdef{def:caus-atom-complete}, where the family is a singleton set.



To establish compositionality, we must first define an object family's
causal ordering. Note that because an object family's $\legal$ set is
just an interleaving of the underlying object's $\legal$ sets,
operations from distinct objects can always be reordered, and
therefore they never conflict. Thus, we have the following lemma. 
\begin{lemma}
  Suppose $\bbS_X = (\Sigma_X, \legal_X)$ is an object family. For any
  $k \in \legal_X$, we have
  $\mathord{\seqcausord_k} = \bigcup_{x\in X} \seqcausord_{k\seqrestr
      x}$. \qed
\end{lemma}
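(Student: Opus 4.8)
The plan is to prove equality of the two relations $\seqcausord_k$ and $\bigcup_{x \in X} \seqcausord_{k \restrict x}$ by double inclusion, where the key fact is that the conflict relation $\#$ never relates operations of distinct objects. Let me verify this observation first. Two operations $o, o'$ conflict iff they fail to commute in some legal context, i.e. $\neg(\forall k_1, k_2.\ k_1\, o\, o'\, k_2 \in \legal \Leftrightarrow k_1\, o'\, o\, k_2 \in \legal)$. For the object family $\bbS_X$, a sequence is legal iff each of its per-object restrictions is legal. If $o \in \Sigma_x$ and $o' \in \Sigma_y$ with $x \neq y$, then by the disjointness of the $\Sigma_x$, swapping $o$ and $o'$ leaves every restriction $(\cdot)\restrict z$ unchanged as a sequence of labels (the swap only permutes two elements landing in two different restrictions, and within each restriction the relative order of the retained elements is untouched). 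Hence $k_1\, o\, o'\, k_2 \in \legal_X \Leftrightarrow k_1\, o'\, o\, k_2 \in \legal_X$ for all contexts, so $\neg(o \# o')$. This shows $o \# o'$ implies $o$ and $o'$ belong to the same $\Sigma_x$.

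With that lemma in hand I would argue both directions. For $\supseteq$: suppose $e \seqcausord_{k \restrict x} e'$ for some $x$. Then $e \# e'$ and $e \seqtempord_{k\restrict x} e'$, so $e$ precedes $e'$ in $k \restrict x$; since $k \restrict x$ is obtained from $k$ by deleting events, the relative order of $e$ and $e'$ is preserved, giving $e \seqtempord_k e'$ and therefore $e \seqcausord_k e'$. For $\subseteq$: suppose $e \seqcausord_k e'$, so $e \# e'$ and $e \seqtempord_k e'$. By the conflict observation above, $e$ and $e'$ lie in the same $\Sigma_x$, hence both survive the restriction to object $x$, and their order is preserved, so $e \seqtempord_{k\restrict x} e'$. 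Combined with $e \# e'$ this yields $e \seqcausord_{k\restrict x} e'$, which is one of the terms in the union.

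The only genuine obstacle is the conflict observation, and even that reduces to a careful but elementary argument about restrictions commuting with the transposition of two elements from disjoint alphabets; everything else is bookkeeping about how $\seqtempord$ behaves under deletion of events. I would be slightly careful to state explicitly that $e \# e'$ is an intrinsic property of the labels (lifted to events via \refdef{def:sequ-object} and the conflict definition), independent of which sequence we evaluate it in, so that the single predicate $e \# e'$ can be used uniformly for both $\seqcausord_k$ and $\seqcausord_{k\restrict x}$.
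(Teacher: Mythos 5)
Your proof is correct and takes essentially the same route as the paper, which states this lemma without proof, justifying it only by the remark that operations of distinct objects never conflict --- exactly the observation you establish and then propagate through a routine double inclusion. The one refinement worth recording is that $\seqcausord_{k\seqrestr x}$ is defined via the conflict relation of $\bbS_x$ rather than of $\bbS_X$, so for labels in $\Sigma_x$ you should note that the two conflict relations coincide, which follows by the same restriction-commutes-with-transposition argument you already give.
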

For an object family $\bbS_X = (\Sigma_X, \legal_X)$ and $x \in X$, we
let $\bbE_x$ be the $\bbS_X$-execution structure $\bbE$ restricted to
$\Sigma_x$.
\begin{theorem}[Compositionality]
  \label{theorem:compositionality}
  Suppose $\bbS_X = (\Sigma_X, \legal_X)$ is an object family over
  $X$, and let $\bbE = (E, \HardArrow, \SoftArrow)$ be a complete
  $\bbS_X$-execution structure. Then, $\bbE_x$ is causally linearizable
  w.r.t. $\bbS_x$ for all $x \in X$ iff $\bbE$ is causally linearizable
  w.r.t. $\bbS_X$.
\end{theorem}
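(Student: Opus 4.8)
The plan is to prove both implications of the ``iff'', with the left-to-right direction being routine and the right-to-left direction carrying the real content.

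For the ``whole implies parts'' direction, suppose $\bbE$ is causally linearizable, witnessed by some $k\in\legal_X$ with $\ev(k)=E$, $\mathord{\HardArrow}\subseteq\mathord{\seqarrow_k}$ and $\mathord{\SoftArrow}\supseteq\mathord{\seqcausord_k}$. Fixing $x\in X$, I would take $k_x:=k\seqrestr x$. By definition of $\legal_X$ we have $k_x\in\legal_x$, and $\ev(k_x)=E_x$. The inclusion $\mathord{\HardArrow_x}\subseteq\mathord{\seqarrow_{k_x}}$ follows by restricting $\mathord{\HardArrow}\subseteq\mathord{\seqarrow_k}$ to $E_x$, since restricting the total order $\seqarrow_k$ to the events of $x$ yields exactly $\seqarrow_{k_x}$. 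For the communication inclusion, the preceding lemma gives $\mathord{\seqcausord_{k_x}}\subseteq\mathord{\seqcausord_k}\subseteq\mathord{\SoftArrow}$, and as $\seqcausord_{k_x}$ relates only events of $x$, it is contained in $\SoftArrow_x$. Hence $\bbE_x$ is causally linearizable.

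For the converse, assume each $\bbE_x$ is causally linearizable, witnessed by $k_x\in\legal_x$. I would assemble a single witness for $\bbE$ by choosing a total order on $E$ that simultaneously extends $\HardArrow$ and every per-object conflict order $\seqcausord_{k_x}$; concretely, set $R:=\mathord{\HardArrow}\cup\bigcup_{x}\mathord{\seqcausord_{k_x}}$ and let $k$ be any sequence whose temporal order $\seqarrow_k$ is a linear extension of $R$. \emph{Assuming $R$ is acyclic}, such a $k$ exists with $\ev(k)=E$, and it does the job. First, $\mathord{\HardArrow}\subseteq R\subseteq\mathord{\seqarrow_k}$ gives the precedence inclusion. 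Second, each restriction $k\seqrestr x$ has event set $E_x=\ev(k_x)$ and satisfies $\mathord{\seqcausord_{k_x}}\subseteq\mathord{\seqarrow_{k\seqrestr x}}$ (since $\seqcausord_{k_x}\subseteq R$ and it relates only $x$-events), so $k\seqrestr x\in\legal_x$ by \reflem{lemma:extensions}, whence $k\in\legal_X$. Third, the communication inclusion $\mathord{\SoftArrow}\supseteq\mathord{\seqcausord_k}$ holds because, using the preceding lemma together with the fact that operations of distinct objects never conflict, any conflicting pair ordered by $k$ lies within a single object $x$; since $\seqarrow_{k_x}$ is total on $E_x$ and $R$ is extended by $\seqarrow_k$, the pair is ordered by $\seqcausord_{k_x}$ in the same direction, so it is a $\seqcausord_{k_x}$-edge and hence a $\SoftArrow$-edge.

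The crux, and the step I expect to be hardest, is therefore establishing that $R$ is acyclic, and this is exactly where the execution-structure axioms enter. A pure conflict cycle is impossible, because consecutive $\seqcausord$-edges must stay inside one object (the label sets $\Sigma_x$ are pairwise disjoint), where they are contained in the strict total order $\seqarrow_{k_x}$. So any cycle must use a $\HardArrow$-edge; I would take one with a minimal number of edges, in which hard edges alternate with maximal communication segments, each segment lying within a single object $y$ and running forward in $\seqarrow_{k_y}$. The idea is to collapse the cycle using axiom~\textbf{A4}: a pattern $e_1\HardArrow e_2\SoftArrow e_3\HardArrow e_4$ yields $e_1\HardArrow e_4$, so telescoping the hard edges through the communication segments should produce a single hard edge $e\HardArrow e'$ between two events of one object $y$ whose witness orders them $e'\seqarrow_{k_y} e$; since $\mathord{\HardArrow_y}\subseteq\mathord{\seqarrow_{k_y}}$ this forces $e\seqarrow_{k_y} e'$, contradicting antisymmetry of $\seqarrow_{k_y}$. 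The genuinely delicate point is that a communication segment may comprise several $\SoftArrow$-edges, and $\SoftArrow$ is \emph{not} transitive, so \textbf{A4} does not apply to a segment verbatim; the telescoping must first reduce each segment to a single communication edge using the flanking hard edges via axioms~\textbf{A2}--\textbf{A3}, or reason about whole segments at once. I would isolate this as a separate lemma about execution structures and expect it to be the only combinatorially subtle part. A cleaner alternative for this lemma is to invoke Lamport's representation of execution structures by global start/finish times~\cite{DBLP:journals/dc/Lamport86}, realise $\bbE$ by such times, note that $\HardArrow$ respects ``finish-before-start'' while each $\seqcausord_{k_x}$-edge respects the object's own order, and read off acyclicity of $R$ directly.
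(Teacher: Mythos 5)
Your proposal is, step for step, the paper's own proof. The ``whole implies parts'' direction you spell out is exactly what the paper dismisses as straightforward (your version is more careful, and correct). For the converse you build precisely the paper's relation $(\HardArrow \cup \bigcup_{x}\seqcausord_{k_x})^+$, take a linear extension, obtain per-object legality from \reflem{lemma:extensions}, and verify the precedence and communication inclusions just as the paper does; your verification that $\mathord{\seqcausord_k}$ lands inside $\bigcup_x \mathord{\seqcausord_{k_x}}$ (hence inside $\SoftArrow$) is again more explicit than the paper's one-line claim. The only divergence is at the acyclicity of the combined relation, which you leave partly open while the paper closes it with a minimal-cycle argument: merge consecutive $\HardArrow$-edges by transitivity, argue each conflict segment between hard edges is a \emph{single} $\seqcausord_{k_z}$-edge, promote it to a $\SoftArrow$-edge via causal linearizability of $\bbE_z$, and shorten the cycle with \textbf{A4}.

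The delicate point you flag is exactly where the paper's argument is thinnest. To rule out two consecutive conflict edges $e_2 \seqcausord_{k_z} e_3 \seqcausord_{k_z} e$ on the minimal cycle, the paper appeals to ``the transitivity of $\seqcausord_{k_z}$''; but $\seqcausord_{k_z}$ is temporal order conjoined with the conflict relation $\#$, and $\#$ is not transitive (for a register initialised to $0$, each write of $1$ conflicts with the read returning $1$, yet the two writes commute), so $e_2$ and $e$ need not conflict and no edge is available for the shortening. Your suspicion that flanking hard edges plus \textbf{A2}--\textbf{A3} cannot collapse a length-two communication segment is also correct: \textbf{A1}--\textbf{A4} do not make $\SoftArrow$ comparable on all pairs, and from $e_1 \HardArrow e_2 \SoftArrow e_3 \SoftArrow e_4 \HardArrow e_5$ nothing hard is derivable. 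Indeed, the auxiliary lemma you would isolate is false under \textbf{A1}--\textbf{A4} alone: take two such registers with events $a_i, c_i, b_i$ labelled $(W(1),{\tt ok})$, $(R,1)$, $(W(1),{\tt ok})$, soft edges $a_i \SoftArrow c_i \SoftArrow b_i$, hard edges $b_1 \HardArrow a_2$ and $b_2 \HardArrow a_1$, and close under \textbf{A2}--\textbf{A3}; one checks \textbf{A1}--\textbf{A4} hold, each restriction is causally linearizable with forced witness $a_i\, c_i\, b_i$, yet the combined relation is cyclic (and no global witness exists, since the sparse $\SoftArrow$ forces $a_i$ before $c_i$ before $b_i$ while the hard edges force $b_1$ before $a_2$ and $b_2$ before $a_1$). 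Your fallback via Lamport's global start/finish times does not rescue this, because the representation by global times requires an axiom beyond \textbf{A1}--\textbf{A4} (in a global-time model any two events are $\SoftArrow$-related in at least one direction), which the structure above violates. So your plan has a genuine gap at the acyclicity lemma---but it is the very same gap as in the paper's proof, hidden there behind the unjustified transitivity claim, and you deserve credit for exposing rather than eliding it; completing the argument needs a strengthened hypothesis on execution structures (e.g., $\SoftArrow$-comparability of distinct events, under which either your telescoping or the interval computation goes through), not merely more care.
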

\begin{proof}
  The implication from right to left is straightforward.  For the
  other direction, for each $x \in X$, let $k_x$ be the legal
  sequential execution witnessing causal linearizability of
  $\bbE_x$. Let $\rightsquigarrow$ be the irreflexive transitive
  relation defined by
  \[ 
    \mathord{\rightsquigarrow} = (\HardArrow \cup \bigcup_{x\in X}
    \seqcausord_{k_x})^+
  \]
  We show that $\rightsquigarrow$ is acyclic, and is therefore a strict
  partial order. Because $\rightsquigarrow$ is a partial order, there is
  some total order $\seqarrow_m\, \supseteq\, \rightsquigarrow$, where
  $m \in \legal_X$. This total order defines a sequence of labels of
  $\bbS_X$.  We prove that this sequence witnesses the causal
  linearizability of $\bbE$.  By definition, we have
  $\mathord{\seqcausord_{k_x}} \subseteq \mathord{\rightsquigarrow}
  \subseteq \mathord{\seqarrow_m}$ for all $x$, and so by Lemma
  \ref{lemma:extensions}, we
  have 
  $\seqarrow_{m\seqrestr x} \in \legal_x$ (where
  $\seqarrow_{m\seqrestr x}$ is the restriction of $\seqarrow_m$ to the
  events of object $x$). Furthermore,
  \begin{enumerate}
  \item $\mathord{\HardArrow} \subseteq \mathord{\seqarrow_m}$ follows
    from
    $\mathord{\HardArrow} \subseteq \mathord{\rightsquigarrow}
    \subseteq \mathord{\seqarrow_m}$,
  \item $\SoftArrow\ \supseteq \mathord{\seqcausord_{k_x}}$ follows
    from causal linearizability of $\bbE_x$, and hence
    $\SoftArrow\ \supseteq \bigcup_{x \in X} \seqcausord_{k_x}$, as
    required.
\end{enumerate}
Thus, $\bbE$ is causally linearizable, as required.

We show that $\rightsquigarrow$ is acyclic by contradiction. Suppose
$\rightsquigarrow$ contains a cycle. Pick
$P = e_1 \rightsquigarrow e_2 \rightsquigarrow \ldots \rightsquigarrow
e_n$ to be the minimal cycle. Since $\HardArrow$ is acyclic, and each
$\seqcausord_x$ is acyclic, the cycle $P$ must contain accesses to
least two different objects. Without loss of generality, assume $e_1$
and $e_2$ access different objects, i.e., $e_1\in E_y$, $e_2\in E_z$
for some $y \neq z$. Since each $\seqcausord_x$ only orders elements
of $E_x$, we must have $e_1 \HardArrow e_2$.  Observe that $P$ must be
of length greater than two, i.e., it cannot be of the form
$e_1 \rightsquigarrow e_2 \rightsquigarrow e_1$ since we would then
have $e_1 \HardArrow e_2 \HardArrow e_1$, which contradicts the
assumption that $\HardArrow$ is a partial order.
 
 Hence $P$ must contain a third (distinct) element $e_3$. Note that $e_2 \not\HardArrow e_3$,
 because otherwise we could shorten the cycle $P$, using the
 transitivity of $\HardArrow$. Thus $e_2, e_3$ are from the same object $z$
 and $e_2 \seqcausord_{k_z} e_3$. By the causal linearizability
 of $E_z$, we must have $e_2 \SoftArrow e_3$. Let $e$ be the element
 of $P$ following $e_4$ (so possibly $e = e_1$). Note that $e_3 \not\seqcausord_{k_z} e$,
 because otherwise we could shorten the cycle $P$, using the
 transitivity of $\seqcausord_{k_z}$. Thus, $e_3 \HardArrow e$, so we have
 \[
   e_1 \HardArrow e_2 \SoftArrow e_3 \HardArrow e
 \]
 By the execution structure axiom {\bf A4}, we have
 $e_1 \HardArrow e$, and hence there exists a cycle
 $e_1 \rightsquigarrow e \rightsquigarrow \dots \rightsquigarrow e_1$
 contradicting minimality of $P$. 
\end{proof}

\subsection{Relationship with classical linearizability}
\label{sec:libraries}

In this section, we show that classical linearizablity, which is
defined for totally ordered \emph{histories} of invocations (events of
type $\Inv$) and responses (events of type $\Resp$), degenerates to
causal linearizability. As in the previous section, for simplicity, we
assume the histories under consideration are complete; extensions to
cope with incomplete histories are straightforward.

First, we describe a method, inspired by the execution structure
constructions given by Lamport \cite{DBLP:journals/dc/Lamport86}, for
constructing execution structures for any \emph{well-formed partially
  ordered} history.  We let
$\History = (\Inv \cup \Resp) \times (\Inv \cup \Resp)$ denote the
type of all histories. A history is \emph{well-formed} if it is a
partial order and the history restricted to each process is a total
order of invocations followed by their \emph{matching} response. The
set of all matching pairs of invocations and responses in a history
$h$ is given by
$mp(h)$. 
A history is \emph{sequential} iff it is totally ordered and each
invocation is immediately followed by its matching response.  Note
that a history could be totally ordered, but not sequential (as is the
case for the concurrent histories considered under
SC~\cite{DBLP:books/daglib/0020056,HeWi90}).

\begin{definition}
  \label{def:abs-map}
  Let $h \subseteq \History$ be a well-formed (partially ordered)
  history. We say $exec(h) = (E, \HardArrow, \SoftArrow)$ is the
  execution structure corresponding to $h$ if
  \begin{align*}
    E & \ \ = \ \ mp(h) \\
    \HardArrow & \ \ =\ \ \{((i_1, r_1), (i_2, r_2)) \in E \times E
                 \mid (r_1, i_2) \in h \} \\
    \SoftArrow & \ \ =\ \ \{((i_1, r_1), (i_2, r_2)) \in E \times E
                 \mid (i_1, r_2) \in h\} 
  \end{align*}
\end{definition}

We now work towards the standard definition of linearizability.
Recall that a sequential object (see \refdef{def:sequ-object}) is
defined in terms of sequences of labels of type $\Sigma^*$, where
$\Sigma = \Inv \times \Resp$, whereas sequential histories are of type
$\History$. 
Thus, we define a function $\gamma : \History \to \Sigma^*$ such that
for each pair $(i, r), (i', r') \in mp(\hs)$ of sequential history
$\hs$ we have $(r, i') \in \hs$ iff
$(i,r) \seqarrow_{\gamma(\hs)} (i',r')$. Thus, the order of operations
in $hs$ and $\gamma(hs)$ are identical.

A complete history $h$ is {\em linearizable} w.r.t.\ a (family of)
sequential object(s) $\bbS = (\Sigma, \legal)$ iff there exists a
sequential history $\hs$ such that $\gamma(\hs) \in \legal$, for each
process $p$, $h \seqrestr p = \hs \seqrestr p$ and
$\mathord{h} \subseteq \mathord{\hs}$ \cite{HeWi90}.

\begin{theorem} 
  \label{th:lin-inst-atomic} 
  Suppose $h$ is a totally ordered complete history and $\bbS$ a
  (family of) sequential object(s). Then $h$ is linearizable w.r.t
  $\bbS$ iff $exec(h)$ is causally linearizable w.r.t. $\bbS$.  \qed
\end{theorem}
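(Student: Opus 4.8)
The plan is to prove each direction by converting the witness required by one condition into the witness required by the other, using the correspondence $\gamma$ between sequential histories and label sequences. Throughout I would write an operation as a matching pair $o = (i,r)$, so that $E = mp(h)$ and the induced relations unfold as $o_1 \HardArrow o_2 \Leftrightarrow (r_1,i_2)\in h$ and $o_1 \SoftArrow o_2 \Leftrightarrow (i_1,r_2)\in h$. The hypothesis that $h$ is a total order on its events will be used exactly once, but decisively.

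For the direction from linearizability to causal linearizability, let $\hs$ be a sequential history witnessing linearizability of $h$, and set $k = \gamma(\hs)$. Since $\hs$ agrees with $h$ on every process, $\ev(k) = mp(\hs) = mp(h) = E$, and $\gamma(\hs)\in\legal$ by assumption, so it remains to verify the two containments. For $\HardArrow \subseteq \seqarrow_k$, suppose $o_1 \HardArrow o_2$, i.e.\ $(r_1,i_2)\in h$; preservation of the real-time order (the content of $\mathord{h}\subseteq\mathord{\hs}$) gives $(r_1,i_2)\in\hs$, and as $\hs$ is sequential this forces $o_1$ to finish before $o_2$ starts, so $o_1 \seqarrow_k o_2$ by the defining property of $\gamma$. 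The interesting containment is $\SoftArrow \supseteq \seqcausord_k$: assume $o_1 \seqcausord_k o_2$, so $o_1 \seqarrow_k o_2$ and $o_1 \# o_2$, and we must produce $(i_1,r_2)\in h$. Here I would invoke totality of $h$: the events $i_1$ and $r_2$ are comparable, so either $(i_1,r_2)\in h$, which is what we want, or $(r_2,i_1)\in h$. The latter yields $o_2 \HardArrow o_1$, hence $o_2 \seqarrow_k o_1$ by the containment just established, contradicting $o_1 \seqarrow_k o_2$ because $\seqarrow_k$ is a total order. Thus $(i_1,r_2)\in h$, i.e.\ $o_1 \SoftArrow o_2$, and $k$ witnesses causal linearizability.

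For the converse, let $k$ witness causal linearizability of $exec(h)$, and let $\hs$ be the sequential history obtained by listing, in the order given by $k$, each operation $o=(i,r)$ as the invocation $i$ immediately followed by its matching response $r$. Then $\gamma(\hs)=k\in\legal$, and $\ev(k)=E=mp(h)$ ensures $\hs$ and $h$ have the same operations. To see per-process agreement, note that within a single process the operations of $h$ are totally ordered by well-formedness, that this order coincides with $\HardArrow$ restricted to the process, and that since $\HardArrow\subseteq\seqarrow_k$ with $\seqarrow_k$ total, the order of these operations in $\hs$ equals their order in $h$; hence $h\seqrestr p=\hs\seqrestr p$. Finally, real-time order is preserved: any real-time edge $(r_1,i_2)\in h$ is precisely $o_1\HardArrow o_2$, so $o_1\seqarrow_k o_2$ and therefore $r_1$ precedes $i_2$ in $\hs$. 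Thus $\hs$ witnesses linearizability. Observe that this direction uses only the $\HardArrow$-containment and never $\SoftArrow$, reflecting that causal linearizability is the stronger notion.

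The only genuinely delicate step is $\SoftArrow \supseteq \seqcausord_k$ in the forward direction, and it is exactly there that totality of $h$ is indispensable: in a totally ordered history a communication edge between linearized conflicting operations cannot be avoided, because the only alternative placement of $i_1$ and $r_2$ would reverse their real-time relationship and hence the chosen linearization. Before all this I would treat the identification of $mp(h)$ with $\bbS$-events (reading off the process and a tag from the history events) as routine bookkeeping, and would confirm separately---or appeal to the construction of \refdef{def:abs-map}---that $exec(h)$ really does satisfy axioms {\bf A1}--{\bf A4}, so that the phrase ``causally linearizable'' is applicable to it in the first place.
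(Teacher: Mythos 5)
Your proof is correct, and there is in fact nothing in the paper to compare it against: Theorem~\ref{th:lin-inst-atomic} is stated with a terminal tombstone and no accompanying proof, so your write-up supplies exactly the argument the paper leaves implicit. Both directions are handled soundly -- taking $k=\gamma(\hs)$ in the forward direction, and in the converse unfolding $k$ into a sequential history, recovering per-process agreement from well-formedness together with $\mathord{\HardArrow}\subseteq\mathord{\seqarrow_k}$ and totality of $\seqarrow_k$. You also isolate the right crux: the containment $\mathord{\SoftArrow}\supseteq\mathord{\seqcausord_k}$ is the only place where totality of $h$ is indispensable, since comparability of $i_1$ and $r_2$ in $h$ leaves only the alternative $(r_2,i_1)\in h$, which would give $o_2\HardArrow o_1$ and hence $o_2\seqarrow_k o_1$, contradicting $o_1\seqarrow_k o_2$. (Your argument in fact proves something slightly stronger than required: for total $h$, \emph{every} $\seqarrow_k$-ordered pair lies in $\SoftArrow$, so the conflict hypothesis $o_1\mathrel{\#}o_2$ is never used -- worth a remark, as it shows how degenerate the communication relation becomes in the totally ordered case.) One interpretive point deserves to be made explicit rather than parenthetical: the paper's clause $\mathord{h}\subseteq\mathord{\hs}$ cannot be read literally as containment of the full event orders, since under that reading any history with overlapping operations would fail to be linearizable (a sequential $\hs$ cannot contain both $(i_1,i_2)$ and $(i_2,r_1)$), and the right-to-left direction of the theorem would then be false; it must be read, as you do, as preservation of real-time precedence between matching pairs, i.e.\ $(r_1,i_2)\in h$ implies $(r_1,i_2)\in\hs$. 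With that reading fixed, and the routine check you defer -- that $exec(h)$ of a well-formed total $h$ satisfies axioms \textbf{A1}--\textbf{A4}, a short computation from transitivity of $h$ and invocation-before-response -- your proof is complete.
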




\newcommand{\oproj}{\it oproj}
\newcommand{\TopLoc}{\it \&Top}
\newcommand{\alloc}{\it alloc}

\newcommand{\arlx}{{\sf X}}
\newcommand{\arel}{{\sf R}}
\newcommand{\aacq}{{\sf A}}

\renewcommand{\sb}{{\it sb}} 
\newcommand{\rf}{{\it rf}}
\newcommand{\mo}{{\it mo}} 
\newcommand{\sw}{{\it sw}}
\renewcommand{\fr}{{\it fr}}
\newcommand{\hb}{{\it hb}}

\newcommand{\Mod}{{\it Mod}}
\newcommand{\Qry}{{\it Qry}}
\newcommand{\Loc}{{\it Loc}}
\newcommand{\Alloc}{{\it A}}

\newcommand{\ilabel}[1]{{\bf #1}}

\section{Causal linearizability of C11 implementations}
\label{sec:caus-line-c11}


We now introduce the C11 memory model, where we adapt the programming-language oriented presentation of
C11~\cite{DBLP:conf/popl/LahavGV16,DBLP:conf/vmcai/DokoV16}, but we
ignore various features of C11 not needed for our discussion,
including non-atomic operations and fences. \medskip

\noindent{\bf The C11 memory-model.}
Let $L$ be a set of {\em locations} (ranged over by $x, y$), let $V$
be a set of values (ranged over by $u, v$).  Our model employs a set
of {\em memory events}, which can be partitioned into read events,
$R$, write events, $W$, and update events, $U$. Moreover, let
$\Mod = W \cup U$ be the set of events that modify a location, and
$\Qry = R \cup U$ be the set of events that query a location. For any
memory event $e$, let $loc(e)$ be the event's location, and let
$ann(e)$ be the event's annotation. Let
$\Loc(x) = \{e \mid loc(e) = x\}$.  For any query event let $rval(e)$
be the value read; and for any modification event let $wval(e)$ be the
value written.  An event may carry a {\em synchronisation annotation},
which may either be a release, $\arel$, or an acquire, $\aacq$,
annotation.


A {\em C11 execution} (not to be confused with an {\em execution
  structure}) is a tuple $\bbD = (D, \sb, \rf, \mo)$ where $D$ is a set of
events, and $\sb, \rf, \mo \subseteq D \times D$ define the
\emph{sequence-before}, {\em reads-from} and {\em modification order}
relations, respectively. We say a C11 execution is {\em valid} when it
satisfies:
\begin{enumerate}
\item [\textbf{(V1)}] $\sb$ is a strict order, such that, for each
  process $p$, the projection of $sb$ onto $p$ is a total order;
  
\item [\textbf{(V2)}] for all $(w, r) \in \rf$, $loc(w) = loc(r)$ and
  $wval(w) = rval(r)$; 
\item [\textbf{(V3)}] for all $r \in D \cap Qry$, there
  exists some $w \in D \cap \Mod$ such that $(w, r) \in \rf$;
  
\item [\textbf{(V4)}] for all $(w, w') \in \mo$, $loc(w) = loc(w')$; and
  
\item [\textbf{(V5)}] for all $w, w' \in W$ such that $loc(w) = loc(w')$,
  $(w, w') \in \mo$ or $(w', w) \in \mo$.
\end{enumerate}


Other relations can be derived from these basic relations. For
example, assuming $D_\arel$ and $D_\aacq$ denote the sets of events
with release and acquire annotations, respectively, the {\em
  synchronises-with} relation,
$\sw = \rf \cap (D_\arel \times D_\aacq)$, creates interthread
ordering guarantees based on synchronisation annotations.  The {\em
  from-read} relation, $\fr = (\rf^{-1} ; \mo) \setminus
Id$, 
relates each query to the events in modification order {\em after} the
modification that it read from. Our final derived relation is the {\em
  happens before} relation $\hb = (\sb \cup \sw)^+$, which formalises
causality.  We say that a C11 execution is {\em consistent} if
\begin{itemize}
\item [\textbf{(C1)}] $\hb$ is acyclic, and 
\item [\textbf{(C2)}] $\hb ; (\mo \cup \rf \cup \fr)$ is irreflexive.
\end{itemize}
\medskip

\noindent{\bf Method invocations and responses.} So far, the events
apearing in our memory model are standard.  Our goal is to model
algorithms such as the Treiber stack. Thus, we add {\em method events}
to the standard model, namely, \emph{invocations}, $\Inv$, and
\emph{responses}, $\Resp$. Unlike weak memory at the processor
architecture level, where program order may not be preserved
\cite{DongolJRA18}, program order in C11 is consistent with
happens-before order, and hence, these can be introduced here in a
straightforward manner. The only additional requirement is that
validity also requires \textbf{(V6)} $sb$ for each process projected
restricted $\Inv \cup \Resp$ must be alternating sequence of
invocations and matching responses, starting with an
invocation. \medskip

\noindent{\bf Dynamic memory.}
To describe the behaviour of algorithms, such as the Treiber Stack, we
must define reads and writes to higher-level structures. To this end,
we develop a simple theory of {\em references} to objects, the {\em
  fields} of those objects and \emph{memory allocations} for the
object. We let $F$ be the set of all fields and $\Alloc$ be the set of all
memory allocation events, which  
is an event of the form $A(l)$ for a location $l$.  We let
$. : L \times F \to L$ be the function that returns a location for a
given location, field pair. We use infix notation $x.f$ for $.(x,f)$,
where $x \in L$ and $f \in F$. We then introduce three additional
constraints: \textbf{(A1)} for every $a, a' \in E \cap \Alloc$, if
$loc(a) = loc(a')$ then $a = a'$; and \textbf{(A2)} if $l.f = l'.f'$
then $l = l'$ and $f = f'$. \textbf{(A3)} for all locations $l$ and
fields $f$ there are no allocations of the form $A(l.f)$. \medskip




\noindent\textbf{From C11 executions to execution structures.}
A C11 execution with method invocations and responses naturally gives
rise to an execution structure. First, for a C11 execution $\bbF$, let
the history of $\bbD$, denoted $hist(\bbD)$ be the happens-before
relation for $\bbD$ restricted to the invocation and response
events. By \textbf{(V6)}, $hist(\bbD)$ is a well-formed history. Thus,
we can apply the construction defined in \refsec{sec:libraries}
to build an execution structure $exec(hist(\bbD))$.
\begin{definition}
  We say that a C11 execution $\bbD$ is \emph{causally linearizable}
  w.r.t a sequential object if $exec(hist(\bbD))$ is.
  \qed
\end{definition}
We can now state a compositionality result for a C11 execution $\bbD$
of an object family $X$. The property follows from
\refthm{theorem:compositionality} and the fact that for any object
$x \in X$, $exec(hist(\bbD_x)) = exec(hist(\bbD))_x$, where $\bbD_x$
is $\bbD$ restricted to events of object $x$. Note that $\bbD_x$
contains all events of $x$, i.e., all invocations, responses and
low-level memory operations of $x$. 
\begin{corollary}[Compositionality for C11 executions]
  Suppose that $\bbS_X = (\Sigma_X, \legal_X)$ is an object family
  over $X$, and let $\bbD$ be an execution. Then, $\bbD_x$ is causally
  atomic w.r.t. $\bbS_x$ for all $x \in X$ iff $\bbD$ is causally
  atomic w.r.t. $\bbS_X$.  \qed
\end{corollary}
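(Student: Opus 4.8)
The plan is to read the corollary off \refthm{theorem:compositionality} by transporting that result along the map $\bbD \mapsto exec(hist(\bbD))$, so that no compositionality argument has to be redone at the level of C11 executions. By the definition immediately preceding the corollary, ``$\bbD$ is causally atomic w.r.t.\ $\bbS_X$'' means exactly ``$exec(hist(\bbD))$ is causally linearizable w.r.t.\ $\bbS_X$'', and likewise ``$\bbD_x$ is causally atomic w.r.t.\ $\bbS_x$'' means ``$exec(hist(\bbD_x))$ is causally linearizable w.r.t.\ $\bbS_x$''. Thus the whole statement becomes a statement about execution structures once we pass through $exec\circ hist$, and the work is concentrated entirely in how $exec\circ hist$ interacts with restriction to a single object.

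First I would put $\bbE = exec(hist(\bbD))$ and check that it is a genuine complete $\bbS_X$-execution structure: by \textbf{(V6)} the history $hist(\bbD)$ is well-formed, so the construction of \refdef{def:abs-map} applies and the events of $\bbE$ are the matching invocation/response pairs $mp(hist(\bbD))$, each carrying a label in some $\Sigma_x$. Applying \refthm{theorem:compositionality} to $\bbE$ gives, verbatim, that $\bbE$ is causally linearizable w.r.t.\ $\bbS_X$ iff $\bbE_x$ is causally linearizable w.r.t.\ $\bbS_x$ for every $x \in X$, where $\bbE_x$ denotes $\bbE$ restricted to $\Sigma_x$. Since the right-hand side of the corollary is by definition ``$\bbE$ is causally linearizable'', the only remaining task is to match each per-object condition, i.e.\ to show that $\bbE_x = exec(hist(\bbD))_x$ coincides with $exec(hist(\bbD_x))$.

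This commuting fact, $exec(hist(\bbD_x)) = exec(hist(\bbD))_x$, is the sole piece of new content and the step I expect to be hardest. The carriers agree: both consist of the matching pairs of invocation/response events of object $x$, because restricting $\bbD$ to $x$ deletes no method event of $x$, and by \textbf{(V6)} the pairing of invocations with responses is fixed per process and so is unaffected by removing other objects' events. By \refdef{def:abs-map} it therefore suffices to show that the derived $\HardArrow$ and $\SoftArrow$ relations agree, which reduces to the single claim that happens-before agrees on the method events of $x$: for invocation/response events $a,b$ of $x$, $(a,b)\in\hb_\bbD$ iff $(a,b)\in\hb_{\bbD_x}$.

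The inclusion $\hb_{\bbD_x}\subseteq\hb_\bbD$ is immediate, since $\sb,\rf,\mo$ of $\bbD_x$ are the restrictions of those of $\bbD$ and happens-before is monotone in its generators. The reverse inclusion on $x$-events is the delicate part, and here the memory-model structure must do the work: the only source of order between events of different processes is the synchronises-with relation $\sw=\rf\cap(D_\arel\times D_\aacq)$, and by \textbf{(V2)} every $\rf$-edge, hence every $\sw$-edge, joins events of a common location; since distinct objects own disjoint location sets (from disjointness of the $\Sigma_x$ together with the allocation axioms \textbf{(A1)}--\textbf{(A3)}), every $\sw$-edge lies within a single object. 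I would then take a happens-before path in $\bbD$ between two $x$-events and attempt to project it onto $x$, using that its $\sw$-edges never leave an object and that program order restricted to a process is total (\textbf{(V1)}), so that maximal within-process sub-paths collapse to single $\sb$-edges between the $x$-events they connect. The genuinely subtle case --- and the main obstacle --- is a path between two $x$-events lying in \emph{different} processes whose ordering is routed through program order in one process, a foreign object's $\sw$-edge, and program order in another process; whether every such order between $x$-events is already witnessed inside $\bbD_x$ is exactly what the commuting fact hinges on, and is where the real verification effort (and any additional hypotheses the construction may require) will be concentrated. With the commuting fact established, substituting $exec(hist(\bbD_x))$ for $\bbE_x$ in the biconditional supplied by \refthm{theorem:compositionality} yields the corollary.
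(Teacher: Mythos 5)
Your proposal takes exactly the paper's route: the paper's entire justification for this corollary is the remark that it ``follows from \refthm{theorem:compositionality} and the fact that for any object $x \in X$, $exec(hist(\bbD_x)) = exec(hist(\bbD))_x$'', which is precisely your reduction, including the observation that the corollary is definitionally equivalent to a statement about $exec(hist(\bbD))$. Where you go beyond the paper is in actually probing that commuting identity---the paper asserts it without proof---and your worry is well placed: a happens-before path between two $x$-events that is routed through another object's release--acquire synchronisation (program order into a foreign release write, an $\sw$-edge of that object, program order out of the foreign acquire read) is discarded when restricting to $\bbD_x$, so $exec(hist(\bbD))_x$ can in principle carry strictly more $\HardArrow$-edges than $exec(hist(\bbD_x))$, and closing this (or adding a hypothesis ruling it out) is a genuine obligation that the paper leaves silently unaddressed.
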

Finally, note that because the $sb$ relation is included in $hb$,
$hist(\bbD)$ includes program order on the invocations and responses
of $\bbD$.

\newcommand{\stackof}{{\it stackOf}}
\newcommand{\init}{{\it init}}
\newcommand{\rec}{{\it rec}}

\section{Verification}
\label{sect:verification}

We now describe an operational method for proving that a given
C11 execution is causally linearizable w.r.t a given sequential
object. Accordingly, we give a state-based, operational model of a
sequential object that generates legal sequences of labels
(\refdef{def:sequ-object}), then present a simulation-based proof rule
for causal linearizability (\refsec{sec:hb-simulation}). Then, we
illustrate our technique on the Treiber Stack
(\refsec{sec:case-study}).



\subsection{A simulation relation over happens-before}
\label{sec:hb-simulation}

An {\em operational sequential object} is a tuple
$(\Gamma, \init, \tau)$ where: $\Gamma$ is a set of {\em states};
$\init \in \Gamma$ is
the {\em initial state} 
and $\tau : \Gamma \times H\times \Inv \pfun \Gamma \times H$, where
$H = (\Inv \times \Resp)^*$, is a partial {\em update function} that
applies an invocation to a state and a history, returning the
resulting state and updated history. We require that for
$s \in \Gamma$, $h \in H$ and $i \in \Inv$, there exists some
$r \in \Resp$, such that
$\tau(s, h, i) = (\_\, , h \cdot \langle (i, r)\rangle)$, where we use
$\cdot$ for sequence concatenation. This response $r$ is the object's
response to the invocation $i$.

\begin{example}[Operational sequential stack]
\label{ex:op-stack}
  A stack containing natural numbers can be represented as an operational sequential object
  in the following way. Let $\Gamma = \nat^*$,
  $\init = \langle~\rangle$ and define the update function as follows
  \[
    \tau(s, h, \textsc{Push}(n)) = (n \cdot s, h \cdot \langle
    (\textsc{Push}(n), \bot)\rangle) \quad\:\: \tau(n \cdot s, h,
    \textsc{Pop}) = (s, h \cdot \langle (\textsc{Pop}, n) \rangle )
  \]
  for $n \in \nat$ and $s \in \Gamma$. Note that assuming $i$ is a
  stack invocation (as per \refsec{sec:executionstructures}),
  $\tau(s, i)$ is defined iff $s \neq \langle~\rangle$ or
  $i \neq \textsc{Pop}$.
  \qed
\end{example}
Given an operational sequential object $\bbS = (\Gamma, \init, \tau)$,
it is easy to construct a corresponding sequential object (in the
sense of Definition \ref{def:sequ-object}).  Let
$\Sigma_{\bbS} = \Inv \times \Resp$ and let $\legal_{\bbS}$ be the set
of histories returned by $\tau$. Thus $(\Sigma_{\bbS}, \legal_{\bbS})$
is a sequential object, and our method verifies causal linearizability
w.r.t that
object. 

For the remainder of this section, fix a C11 execution
$\bbD = (D, \sb, \rf, \mo)$, and an operational sequential object
$\bbS = (\Gamma, \init, \tau)$.  We describe a method for proving that
$\bbD$ is causally linearizable w.r.t $\bbS$.  Our proof method is an
induction on the length of some linear extension of $\bbD$'s $hb$
order. The proof proceeds by remembering the set $Z \subseteq D$ of
events that have already been considered by the induction, i.e., $Z$
defines the {\em current stage} of the induction. The set $Z$ is
assumed to be downclosed with respect to $\hb$, i.e., if $z \in Z$ and
$(z', z) \in \hb$, then $z' \in Z$.  At each stage of the induction,
we add an arbitrary $e \notin Z$ to $Z$, where $e$'s $\hb$
predecessors are already in $Z$ (i.e., the set $Z \cup \{e\}$ is also
downclosed w.r.t. $\hb$).

Correctness of each inductive step is formalised by a {\em simulation
  relation}, $\rho$, relating the events in the current state, $Z$, to
a state of the operational sequential object. Each inductive step of
the implementation must match a ``move'' of the sequential object,
i.e., be a stutter step, or a state update as given by the update
function of the sequential object. Moreover, assuming that $\rho$
holds for $Z$ (before each inductive step), $\rho$ must hold after the
step (i.e., for $Z \cup \{e\}$).

Following the existing verification literature \cite{DongolD15}, we
refer to events corresponding to non-stuttering steps as
\emph{linearization points}: the points where the high-level operation
appears to take effect. The verifier must define a function
$lp : D \cap Inv \to D$ to determine the memory event that linearizes
the given invocation, and this function must satisfy certain
constraints with respect to the simulation relation $\rho$, as
described in Definition \ref{def:hb-sim}, below.

For each low-level operation, we must also determine the invocation and
response to which it belongs. Thus we also define a function
$\mu : D \to D \cap \Inv$ that maps each event in $D$ to the
invocation responsible for producing $e$, and a function and
$\nu : D \to D \cap \Resp$ that that maps $e$ to the response produced
by $e$'s invocation. More formally, $\mu(e)$ is the latest invocation
in $\sb$-order prior to $e$, and $\nu(e)$ is the earliest response in
$\sb$-order after $e$.



Thus, we obtain the following definition.
\begin{definition}[$\hb$-simulation]
\label{def:hb-sim}
Suppose
  $\bbD = (D, \sb, \rf, \mo)$ is an execution and
  $\bbS = (\Gamma, \init, \tau)$ an operational sequential object.  An
  {\em hb-simulation} is a relation
  $\rho \subseteq 2^D \times (\Gamma \times H)$ such that:
  \begin{enumerate}
  \item \label{hbsim:init} $\rho(\emptyset, (\init, \langle~\rangle))$, and \hfill
    (initialisation)
  \item \label{hbsim:step} for all $Z \subseteq D$, and events $e \in D \setminus Z$ such
    that $Z \cup \{e\}$ is down-closed w.r.t $\bbD$'s happens-before
    order, if $\rho(Z, (s, h))$ then
    \begin{enumerate}
    \item \label{hbsim:stutter} if $e \neq lp(\mu(e))$ then $\rho(Z \cup \{e\}, (s,
      h))$, \hfill (stutter step)
    \item \label{hbsim:lin} if $e = lp(\mu(e))$ \hfill (linearization step)
      
      provided $i = \mu(e)$, $h' = h\cdot \langle (i, r)\rangle$, and
      $\tau(s, h, i) = (s', h')$, then
      \begin{enumerate}
      \item \label{hbsim:lin-pres} $\rho(Z \cup \{e\}, (s', h'))$, and
      \item \label{hbsim:lin-resp} $\nu(e) = r$, and
      \item \label{hbsim:lin-caus} for all operations $(i', r')$ in $h$, if
        $(i',r') \seqcausord_{h'} (i, r)$
        then $(lp(i'), e) \in hb$.
      \end{enumerate}
    \end{enumerate}
  \end{enumerate}
\end{definition}
The initialisation is straightforward, while the two inductive steps
consider a new $e$ for inclusion in $Z$ following $\hb$ order. If $e$
is a stutter step, we only have to prove that $\rho$ is preserved by
adding $e$ to $Z$. If $e$ is a linearization step (that is, if
$e = lp(\mu(e))$), then there are three obligations: prove that $\rho$
is preserved (\ref{hbsim:lin-pres}); prove that the response of the
high-level operation matches that returned by the sequential object
(\ref{hbsim:lin-resp}); and prove that whenever some operation that
has already been linearized is causally prior to the newly linearized
operation, then that operation's linearization point is $hb$-prior to
the new event $e$ (\ref{hbsim:lin-caus}).





\begin{theorem}[Soundness of hb-simulation]
\label{theor:hb-sim}
If $\rho$ is an hb-simulation for a C11 execution $\bbD$, then $\bbD$
is causally linearizable.
\end{theorem}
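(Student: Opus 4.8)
The plan is to use the inductive structure of the hb-simulation to assemble a single legal sequential history $k$, and then to check directly that $k$ witnesses causal linearizability of $exec(hist(\bbD))$ in the sense of \refdef{def:caus-atom-complete}. First I would fix a linear extension $e_1, \dots, e_N$ of $\bbD$'s happens-before order and put $Z_j = \{e_1, \dots, e_j\}$; each $Z_j$, and each $Z_j \cup \{e_{j+1}\}$, is then down-closed with respect to $\hb$, which is exactly the side condition required to apply clause~\ref{hbsim:step}. Starting from $\rho(\emptyset, (\init, \langle~\rangle))$ (clause~\ref{hbsim:init}), I would run the induction to produce configurations with $\rho(Z_j, (s_j, h_j))$: a stutter event (\ref{hbsim:stutter}) leaves the configuration unchanged, while a linearization event $e_{j+1} = lp(\mu(e_{j+1}))$ advances it to $\tau(s_j, h_j, \mu(e_{j+1}))$, thereby appending to the history the operation linearized by $e_{j+1}$, namely the matching pair $(\mu(e_{j+1}), \nu(e_{j+1}))$. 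Setting $k = h_N$, this history is reachable from $\init$ by iterating $\tau$, and is therefore legal.

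Throughout I would rely on one elementary fact: for an operation with invocation $i$, response $r$ and linearization point $lp(i)$ --- which, being the event that linearizes $i$, sits between $i$ and $r$ in the per-process $\sb$-order --- both $(i, lp(i))$ and $(lp(i), r)$ lie in $\sb \subseteq \hb$; in particular $\mu(lp(i)) = i$ and $\nu(lp(i)) = r$. The condition $\ev(k) = E$ is then immediate: since the execution is complete, $E = mp(hist(\bbD))$ is exactly the set of matching pairs, and each such $(i, r)$ is appended to $k$ precisely once, when its unique linearization event $lp(i) \in Z_N = D$ is processed, with the recorded response being the genuine one $r = \nu(lp(i))$ by obligation~\ref{hbsim:lin-resp}. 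I would also note that the position of an operation in $k$ coincides with the position of its linearization point in the chosen extension.

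It remains to check the two order inclusions. For $\mathord{\HardArrow} \subseteq \mathord{\seqarrow_k}$, assume $(i_1, r_1) \HardArrow (i_2, r_2)$, i.e.\ $(r_1, i_2) \in \hb$; chaining $(lp(i_1), r_1)$, $(r_1, i_2)$, $(i_2, lp(i_2))$ by transitivity of $\hb$ gives $(lp(i_1), lp(i_2)) \in \hb$, so $lp(i_1)$ precedes $lp(i_2)$ in the extension and hence $(i_1, r_1)$ precedes $(i_2, r_2)$ in $k$. For $\mathord{\SoftArrow} \supseteq \mathord{\seqcausord_k}$, assume $(i_1, r_1) \seqcausord_k (i_2, r_2)$, so the two operations conflict and $(i_1, r_1)$ precedes $(i_2, r_2)$ in $k$; at the step where $(i_2, r_2)$ is linearized, with $h' = h \cdot \langle (i_2, r_2) \rangle$, the pair $(i_1, r_1)$ already occurs in $h$, whence $(i_1, r_1) \seqcausord_{h'} (i_2, r_2)$ and obligation~\ref{hbsim:lin-caus} yields $(lp(i_1), lp(i_2)) \in \hb$. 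Chaining $(i_1, lp(i_1))$, $(lp(i_1), lp(i_2))$, $(lp(i_2), r_2)$ then gives $(i_1, r_2) \in \hb$, which by \refdef{def:abs-map} is precisely $(i_1, r_1) \SoftArrow (i_2, r_2)$.

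I expect the $\mathord{\SoftArrow} \supseteq \mathord{\seqcausord_k}$ inclusion to be the main obstacle, since obligation~\ref{hbsim:lin-caus} only constrains operations already linearized when $(i_2, r_2)$ is processed, whereas $\seqcausord_k$ is read off the completed history; the observation that reconciles them is that order in $k$ is the order of linearization points, so every causal predecessor of $(i_2, r_2)$ in $k$ is indeed present at that step. A secondary point needing care is that $\tau$ must be defined at each linearization step so the induction builds a complete history rather than stalling; this is where the matching of $\nu(e)$ to the object's response (obligation~\ref{hbsim:lin-resp}) and the verifier's choice of $lp$ are used.
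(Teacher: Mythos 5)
Your proof takes essentially the same route as the paper's: induct along a linear extension of $\hb$, use the simulation clauses to assemble the witness history $k$ from the linearization events (the paper separates this into a state function $rep$ and a history $k$, whereas you read $k$ off the history-carrying $\tau$ -- an inessential difference), and then chain $(i, lp(i)), (lp(i), r) \in \sb \subseteq \hb$ to get $\mathord{\HardArrow} \subseteq \mathord{\seqarrow_k}$, and obligation~(c) at each linearization step to get $\mathord{\seqcausord_k} \subseteq \mathord{\SoftArrow}$. If anything you are more explicit than the paper, which leaves implicit both the check $\ev(k) = E$ and the final passage from $(lp(i_1), lp(i_2)) \in \hb$ to the $\SoftArrow$ edge via $(i_1, r_2) \in \hb$ and \refdef{def:abs-map}.
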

\begin{proof}The proof below uses a formulation of an operational
  sequential object where $\tau$ that does not maintain a history.

Fix the operational sequential object $\bbS = (\Gamma, \init, \tau)$.
Fix the execution $\bbD$, and let $\leq_E$ be any linear extension of $\bbD$'s
$\hb$ relation. Assume that $lp$ is the linearization function and
$\rho$ is the simulation relation.

We perform an induction on the indexes of $\leq_E$. Let $e_{n}$ be the
nth event in $\leq_E$ order, so we are indexing from $0$. Let $Z_n$ be the
set of events strictly below the nth index. Thus,
\[
Z_n = \{e_m \mid m < n\}
\]
Note that $Z_0 = \emptyset$ and $Z_{n+1} = Z_n \cup \{e_n\}$.  We
define a function $rep : \{n \mid n < |\leq_E|\} \to \Gamma$
recursively as follows:
\begin{align}
rep(0) &= \init &\\
rep(n+1) &= rep(n) &\text{when $lp(\mu(e_{n})) \neq e_{n}$}\\
rep(n+1) &= \pi_1(\tau(rep(n), \mu(e_{n})) &\text{when $lp(\mu(e_{n})) = e_{n}$}
\end{align}
By induction, we have $\rho(Z_n, rep(n))$ for all $n < |\leq_E|$.
\begin{itemize}
\item Because $\rho(Z_0, rep(0)) = \rho(\emptyset, \init)$, Proposition \ref{hbsim:init}
ensures that $\rho(Z_0, rep(0))$.
\item Assume $\rho(Z_n, rep(n))$ and $lp(\mu(e_{n})) \neq e_{n}$. Then
$\rho(Z_{n+1}, rep(n+1)) = \rho(Z_n \cup \{e_n\}, rep(n))$, and thus Property
\ref{hbsim:stutter} ensures that $\rho(Z_{n+1}, rep(n+1))$.
\item Assume $\rho(Z_n, rep(n))$ and $lp(\mu(e_{n})) = e_{n}$. 

Then
$\rho(Z_{n+1}, rep(n+1)) = \rho(Z_n \cup \{e_n\}, \pi_1(\tau(rep(n), \mu(e_{n})))$, and thus Property
\ref{hbsim:lin-pres} ensures that $\rho(Z_{n+1}, rep(n+1))$.
\end{itemize}

We turn now to defining $k$, the legal sequence we need to witness
causal linearizability of $\bbD$.
\begin{align}
k_0 &= \langle\rangle &\\
k_{n+1} &= k_n &\text{when $lp(\mu(e_{n})) \neq e_{n}$}\\
rep(n+1) &= k_n \cdot \langle(\mu(e), \nu(e)) \rangle&\text{when $lp(\mu(e_{n})) = e_{n}$}
\end{align}
It is easy to see that this is a legal history, and that $(k_n, rep(n))$ is a move.

We need to show that $\leq_H \subseteq \seqtempord_k$. Consider a response $r$ and invocation
$i$ such that $(r, i) \in hb$. Ley $i'$ be the invocation of $r$, and let $r'$ be the
response of $i$..
Because $\mu(lp(i)) = i$, we have $\{(lp(i'), r), (i, lp(i))\} \in sb \subseteq hb$,
and thus $(lp(i'), lp(i)) \in hb$ and so $lp(i')$ appears at an earlier point in
$\leq_E$ than $lp(i)$, and therefore $(i', r) \seqtempord_k (i, r')$, as required.

Finally, we must show that $\seqcausord_k \subseteq \SoftArrow_{\bbD}$. This is a
simple induction on the length of $k$, with the hypothesis that, for all
operations $(i, r), (i', r')$ in $k$, if $(i, r) \seqcausord_k (i', r')$ then
$(lp(i), lp(i') \in hb$. At each step we apply Property \ref{hbsim:lin-caus}.
Thus, for each existing operation $(i, r)$ and new operation $(i', r')$,
we have $(i, r) \seqcausord_k (i', r') \implies (lp(i), lp(i') \in hb$
immediately. On the other hand, $(i', r') \seqcausord_k (i, r)$ is impossible,
because $(i', r') \seqtempord_k (i, r)$ is false.

This completes oour proof.
\end{proof}




\subsection{Case-study: the Treiber Stack}
\label{sec:case-study}

We now outline an $\hb$-simulation relation $\rho$ for the Treiber
stack. We fix some arbitrary C11 execution $\bbD = (D, \sb, \rf, \mo)$
that contains an instance of the Treiber stack. That is, the
invocations in $\bbD$ are the stack invocations, and the responses are
the stack responses (as given in Section
\ref{sec:executionstructures}).  Furthermore, the low-level memory
operations between these invocations and responses are generated by
executions of the operations of the Treiber stack
(\refalg{alg:treiber}).

The main component of our simulation relation guarantees
correctness of the \emph{data representation}, i.e., the sequence of values formed by
following next pointers starting with $\&Top$ forms an appropriate stack, and we focus
on this aspect of the relation. As is typical with verifications of shared-memory algorithms, there
are various other properties that would need to be considered in a full proof.

In a sequentially consistent setting, the data
representation can easily be obtained from the state (which maps
locations to values). However, for C11 executions calculating the
data representation requires a bit more work. In what follows, we define
various functions that depend on a set $Z$ of events, representing the current
stage of the induction.

We define the {\em latest write} in $Z$ to a location $x$ as
$latest_Z(x) = max(\mo \seqrestr (Z \cap \Loc(x)))$ and the
\emph{current value} of a location $x$ in some set $Z$ as
$cval_Z(x) = wval(latest_Z(x))$, which is the value written by the
last write to $x$ in modification order.
It is now straightforward to construct the sequence of values
corresponding to a location as
$\stackof_Z(x) = v \cdot \stackof_Z(y)$, where $v = cval_Z(x.val)$ and
$y = cval_Z(x.nxt)$.

Now, assuming that $(s, h)$ is a state of the operational sequential stack, our
simulation requires:
\begin{align}
\label{prop:data-rep}
\stackof_Z(cval_Z(\&Top)) = s
\end{align}
Further, we require that all modifications of $\&Top$ are totally
ordered by $hb$:
\begin{align}
\label{prop:mod-total-ord}
  \forall m, m' \in Z \cap Mod(\&Top).\ (m, m') \in hb \vee (m', m) \in hb 
\end{align}
to ensure that any new read considered by the induction sees the most
recent version of $\& Top$. 

The linearization function $lp$ for the Treiber stack is completely standard:
each operation is linearized at the unique update operation generated by
the unique successful CAS at line 9 (for pushes) or line 16 (for pops).

In what follows, we illustrate how to verify the proof obligations
given in Definition \ref{def:hb-sim}, for the case where the new event
$e$ is a linearization point.  Let $e$ be an update operation that is
generated by the CAS at line 9 of the push operation in 
\refalg{alg:treiber}. The first step is to prove that every
modification of $\&Top$ in $Z$ is happens-before the update event
$e$. Formally,
\begin{align}
\label{prop:after-all}
  \forall m \in Z \cap \Mod \cap \Loc(\&Top).\ (m, e) \in hb
\end{align}
Proving this formally is somewhat involved, but the essential reason
is as follows.  Note that there is an acquiring read $r$ to $\&Top$
executed at line 7 of $e$'s operation and $sb$-prior to $e$.  $r$
reads from some releasing update $u$.  Thus, by Property
\ref{prop:mod-total-ord}, and the fact the $hb$ contains $sb$, $e$ is
happens after $u$, and all prior updates. If there were some update
$u'$ of $\&Top$ such that $(u', e) \notin hb$, then
$(u', u) \notin hb$ so by Property \ref{prop:mod-total-ord},
$(u, u') \in hb$. But it can be shown in this case that the CAS that
generated $e$ could not have succeeded, because $u'$ constitutes an
update intervening between $r$ and $e$. Therefore, there can be no
such $u'$.


Property \ref{prop:after-all}
makes it straightforward to verify that Condition \ref{hbsim:lin-caus} of Definition \ref{def:hb-sim}
is satisfed. To see this, note that every linearization point of every operation
is a modification of $\&Top$. Thus, if $(i', r')$ is some operation such that
$lp(i') \in Z$ (so that this operation has already been linearized) then $(lp(i'), e) \in hb$.

Using Property \ref{prop:after-all} it is easy to see that both
Property \ref{prop:data-rep} and Property~\ref{prop:mod-total-ord} are
preserved. We show by contradiction that $latest_{Z'}(\&Top) =
e$. Otherwise, we have $(e, latest_{Z'}(\&Top)) \in mo$.  Therefore
$(latest_{Z'}(\&Top), e) \notin hb$, but $latest_{Z'}(\&Top)$ is a
modification operation, so this contradicts Property
\ref{prop:after-all}.

It follows from $latest_{Z'}(\&Top) = e$ that
$\stackof(cval_{Z'}) = \stackof(wval(e))$. Given this, it is
straightforward to show that Property \ref{prop:data-rep} is
preserved. This step of the proof relies on certain simple properties
of push operations.  Specifically, we need to show that the current
value of the $val$ field of the node being added to the stack
(formally, $cval_Z((wval(e)).nxt)$) is the value passed to the push
operation; and that the current value of the $nxt$ field (formally,
$cval_Z((wval(e)).nxt)$) is the current value of $\&Top$ when the
successful CAS occurs. These properties can be proved using the model
of dynamic memory given in Section \ref{sec:caus-line-c11}.






\section{A synchronisation pitfall}
\label{sec:synchr-pitf}

We now describe an important observation regarding failure of
compositionality of read-only operations caused by weak memory
effects. The issue can be explained using our abstract notion of an
execution structure, however, a solution to the problem is not
naturally available in C11 with only release-acquire annotations.

Consider the Treiber Stack in \refalg{alg:treiber} that returns empty
instead of spinning; namely where the inner loop (lines
\ref{spin1}-\ref{spin3}) is replaced by code block ``top :=$^{\sf A}$
Top ; {\bf if} top = null {\bf then} {\bf return} {\tt empty}''.  Such an
implementation could produce executions such as the one in
\reffig{fig:isempty-1} which, like the examples in
\refsec{sec:executionstructures}, is not compositional. Recovering
compositionality requires one to introduce additional communication
edges as shown in \reffig{fig:isempty-2}. In the C11 memory model,
these correspond to ``from-read'' anti-dependencies from a read to a
write overwriting the value read. However, release-acquire
synchronisation is not adequate for promoting from-read order in the
memory to happens-before.

    


\begin{figure}[!t]
  \noindent
  \begin{minipage}[t]{0.45\columnwidth}
    \centering
    \scalebox{1}{
      \begin{tikzpicture}
        \node [draw,scale=0.8] (A) {$(\textsc{S.Push}(1), \bot)$};
        \node [draw, right=of A,scale=0.8] (B) {$(\textsc{S'.Pop}, {\tt empty})$};
        \node [draw, below=of A,scale=0.8] (C) {$(\textsc{S'.Push}(2),
          \bot)$};
        \node [draw, right=of C,scale=0.8] (D) {$(\textsc{S.Pop},
          {\tt empty})$};
        
        \draw [thick,->] (A) -- (B) ;
        \draw [thick,->] (C) -- (D) ;       
      \end{tikzpicture}}
      \caption{Read-only operations without communication (not
        compositional)}
      \label{fig:isempty-1}
  \end{minipage}
  \hfill 
  \begin{minipage}[t]{0.47\columnwidth}
    \centering
    \scalebox{1}{
      \begin{tikzpicture}
        \node [draw,scale=0.8] (A) {$(\textsc{S.Push}(1), \bot)$};
        \node [draw, right=of A,scale=0.8] (B)
        {$(\textsc{S'.Pop}, {\tt empty})$}; \node [draw, below=of
        A,scale=0.8] (C) {$(\textsc{S'.Push}, \bot)$}; \node [draw,
        right=of C,scale=0.8] (D)
        {$(\textsc{S.Pop}, {\tt empty})$};
        
        \draw [thick,->] (A) -- (B) ;
        \draw [thick,->] (C) -- (D) ;       
        \draw [thick,dashed,->] (B) -- (C.north) ;
        \draw [thick,<-] (B.south) -- (C.north east) ;
        \draw [thick,dashed,->] (D) -- (A) ;
        \draw [thick,->] (A.south east) -- (D.north) ;       
      \end{tikzpicture}}
      \caption{Read-only operations with communication
        (compositional)}
      \label{fig:isempty-2} 
  \end{minipage}
\end{figure}


One fix would be to disallow read-only operations, e.g., by
introducing a release-acquire CAS operation on a special variable that
always succeeds at the start of each operation. However, such a fix is
somewhat unnatural. Another would be to use C11's {\sf SC}
annotations, which can induce synchronisation across from-read
edges. However, the precise meaning of these annotations is still a
topic of active
research 
\cite{DBLP:conf/pldi/LahavVKHD17,DBLP:conf/popl/BattyDW16}.




\section{Conclusion and related work}

We have presented {\em causal linearizability}, a new correctness
condition for objects implemented in weak-memory models,
that generalises linearizability and addresses the important problem of compositionality.
Our condition is not tied to a particular memory model,
but can be readily applied to memory models, such as C11,
that feature a happens-before relation. We have presented
a proof method for verifying causal linearizability.
We emphasise that our proof method can be applied directly to a
standard axiomatic memory model. Unlike other recent proposals
\cite{DBLP:conf/esop/DokoV17,DBLP:conf/ecoop/KaiserDDLV17}, we model C11's relaxed
accesses without needing to prohibit their problematic dependency cycles
(so called ``load-buffering'' cycles).


Although causal linearizability has been presented as a condition for
concurrent objects, we believe it is straightforward to extend this
condition to cover, for example, transactional memory. We
intend to develop our approach into a framework in which
the behaviour of programs that mix transactional memory, concurrent objects
and primitive weak-memory operations can be precisely described in a compositional
fashion.


Causal linearizability is closely related to \emph{causal
  $\hb$-linearizability} defined in \cite{DongolJRA18}, which is a
causal relaxation of linearizability that uses specifications
strengthened with a happens-before relation. The compositionality
result there requires that either a specification is commuting or that
a client is unobstructive (does not introduce too much
synchronisation). Our result is more general as we place no such
restriction on the object or the client. Others
\cite{DBLP:conf/sefm/DohertyD16} define a correctness condition, also
called causal linearizabilty, that is only compositional when the
client satisfies certain constraints; in contrast, we achieve full
decoupling.  Furthermore, that condition is only defined when the
underlying memory model is given operationally, rather than
axiomatically like C11. Early attempts, targetting TSO architectures,
used totally ordered histories but allowed the response of an
operation to be moved to a corresponding ``flush'' event
\cite{DBLP:conf/wdag/GotsmanMY12,DBLP:conf/esop/BurckhardtGMY12,DBLP:conf/hvc/TravkinW14,DBLP:conf/ictac/DongolDS14}.
Others have considered the effects of linearizability in the context
of a client abstraction. This includes a correctness condition for C11
that is strictly stronger than linearizability under SC
\cite{DBLP:conf/popl/BattyDG13}. Although we have applied causal
linearizability to C11, causal linearizability itself is more general
as it can be applied to any weak memory model with a happens-before
relation. Causal consistency \cite{Ahamad1995} is a related condition,
aimed at shared-memory and data-stores, which has no notion of
real-time order and is not compositional.





\bibliographystyle{plain}

\bibliography{references}

\newpage
\appendix

\section{Potentially incomplete executions}
\label{sec:incompl-exec}

A complication with concurrent executions is that they may contain
\emph{incomplete operations} (that have been invoked, but have not yet
returned). Since the effect of an incomplete operation may be globally
visible, they cannot simply be ignored. This phenomenon has been well
studied and arises in the definitions of linearizability \cite{HeWi90}
and opacity \cite{DBLP:conf/ppopp/GuerraouiK08}. This section
describes how we cope with incomplete operations in the context of
causal linearizability.

We define the \emph{completable extension} of a sequential object
$\bbS = (\Sigma, \legal)$ to be a triple $\bbT = (\bbS, I, C)$, where
$I$ is a set of \emph{allowable incomplete actions} and
$C : I \to 2^\Sigma$ is a \emph{completion function} that maps each
$I$ to a set of possible completions for $I$.
\begin{example}
  If $\bbS = (\Sigma, \legal)$ is a concurrent object the set of
  allowable incomplete operations and completion function is defined
  by:
  \begin{align*}
    I & = \{i \in Inv \mid \exists k_1, k_2 \in \Sigma^*.\ \exists r.\ k_1 (i, r)
        k_2 \in legal\}
    & \quad
    C & = \lambda i.\  \{(i, r) \in \Sigma \}
  \end{align*}
  For the Treiber stack in \refalg{alg:treiber}, we have
  $I_\textsc{S} = \{\textsc{Push}(n) \mid n\in \mathbb{N}\} \cup
  \{\textsc{Pop}
  \}$, and
  $C_\textsc{S} (\textsc{Push}(n)) = \{\bot\}$ and
  $C_\textsc{S} (\textsc{Pop}) = \nat \cup \{\Empty\}$. 
\end{example}

A \emph{completable extension} of an object family $\bbS_X$ is a
triple $\bbT_X = (\bbS_X, I_X, C_X)$, where $I_X = \bigcup_x I_x$ and
$C_X(a) = C_x(a)$, with $x$ being the unique element of $X$ such that
$a \in \Sigma_x$. 

We say that $\mathbb{E} = (E, \HardArrow, \SoftArrow)$ is a
\emph{$\bbT$-execution structure} iff $E \subseteq \Sigma \cup I$ such
that $\dom(\HardArrow) \cap I = \emptyset$, i.e., no element of $E$
may depend (in real-time order) on an element in $I$. Note that there
may be $\SoftArrow$ edges both in and out of elements in $E \cap I$
and $\HardArrow$ edges into $E \cap I$.  A $\bbT$-execution structure
is causally atomic if we can replace all incomplete events by complete
events in a way that is allowed by the corresponding sequential
object. This process is analagous to the {\em extension} of incomplete
histories to complete histories, as allowed by linearizability and
opacity in the classical (i.e., sequentially consistent) setting.

\begin{definition}[Causal linearizability]
  \label{def:caus-atom}
  Let $\bbS = (\Sigma, \legal)$ be a family of sequential objects and
  $\bbT$ its completable extension. A $\bbT$-execution structure
  $\mathbb{E} = (E, \HardArrow, \SoftArrow)$ is {\em causally atomic}
  w.r.t. $\bbS$ iff there exists a causally atomic $\bbS$-execution
  structure $\mathbb{E'} = (E', \HardArrow', \SoftArrow')$ and an
  (order-preserving) isomorphism
  $\varphi : \mathbb{E} \to \mathbb{E}'$ such that:
  \begin{itemize}
  \item for each event $e = (g, p, i) \in E$, where $i \in I$, we have
    $\varphi(e) = (g, p, a)$ for some $a \in C(i)$, and
  \item for each event $e = (g, p, a) \in E$, where $a \in \Sigma$, we
    have $\varphi(e) = (g, p, a)$.
  \end{itemize}
\end{definition}


\refthm{theorem:compositionality} extends directly to the case of
incomplete histories. The fact that each individual object history is
causally atomic implies that we can assume the existence of a valid
extension for each incomplete event, which is itself causally atomic.
Thus, we can apply these per-object extensions to the object-family
execution, and show, using the proof of Theorem
\ref{theorem:compositionality}, that the resulting complete execution
structure is causally atomic.

\end{document}